\newtheorem{theorem}{Theorem}
\newtheorem{lemma}[theorem]{Lemma}
\newtheorem{remark}[theorem]{Remark}
\numberwithin{equation}{section}
\numberwithin{theorem}{section}
\DeclareMathOperator{\Div}{Div}
\DeclareMathOperator{\pr}{pr}
\DeclareMathOperator{\E}{E}
\DeclareMathOperator{\D}{D}
\begin{document}

\title{Variational symmetries and Lagrangian multiforms}

\author{Duncan Sleigh, Frank Nijhoff and Vincent Caudrelier\\ 
School of Mathematics, University of Leeds}

\maketitle

\abstract{By considering the closure property of a Lagrangian multiform as a conservation law, we use Noether's theorem to show that every variational symmetry of a Lagrangian leads to a Lagrangian multiform.  In doing so, we provide a systematic method for constructing Lagrangian multiforms for which the closure property and the multiform Euler-Lagrange (EL) both hold.  We present three examples, including the first known example of a Lagrangian 3-form: a multiform for the Kadomtsev{-}Petviashvili equation.  We also present a new proof of the multiform EL equations for a Lagrangian k-form for arbitrary k.}

\section{Introduction}

When considering integrable systems, a key weakness of the conventional Lagrangian description is that it does not capture multidimensional consistency - the fact that the equations of motion can be seen as members of a hierarchy of compatible equations  which can be simultaneously imposed on the same dependent variables.  A classical Lagrangian functional will only provide one single equation of the motion per component of the system, with no clear connection to the other equations of the hierarchy.  This weakness was overcome in the paper \cite{Lobb2009} where it was proposed to extend the scalar Lagrangian 
\begin{equation}
\mathscr{L}(x,u^{(n)})\textsf{d}x_1\wedge\ldots\wedge \textsf{d}x_k,
\end{equation}
a volume form on a $k$-dimensional base manifold, to a differential $k$-form
\begin{equation}\label{form}
\textsf{L}=\sum_{1\leq i_1<\ldots <i_k\leq N}\mathscr{L}_{(i_1\ldots i_k)}(x,u^{(n)})\ \textsf{d}x_{i_1}\wedge\ldots\wedge \textsf{d}x_{i_k}.
\end{equation}
on a $N$ dimensional base manifold with $k < N$\footnote{Note that in principle we are often working in an arbitrary number of dimensions,
determined by the number of flows of a given integrable hierarchy that we include in our multiform.}.  We use the notation $u^{(n)}$ to represent $u$ and its derivatives up to the $n^{th}$ order. This led to the introduction of a new notion of a Lagrangian multiform, where the multidimensional consistency manifests itself by the action
\begin{equation}
S[u;\sigma]=\int_{\sigma}\textsf{L}(x,u^{(n)})
\end{equation}
having a critical point $u$, such that $u$ is simultaneously a critical point for every choice of the surface of integration $\sigma$, and also that the action $S$ is invariant with respect to interior deformations of the surface of integration.  The first of these conditions is equivalent to the requirement that $\delta \textsf{dL}=0$ and defines the equations of motion known as the multiform Euler-Lagrange equations\footnote{See \eqref{deltadL} for an explanation of this notation.}.  The second of these conditions gives us the closure relation that, {\it on the equations of motion}, $\textsf{dL}=0$ (this follows from Stokes' theorem).  We shall call a differential form $\textsf{L}$ of the type given in \eqref{form} a \textbf{Lagrangian multiform} if $\textsf{dL}=0$ on the equations defined by $\delta \textsf{dL}=0$.  If the solution $u$ defined by $\delta \textsf{dL}=0$ is the zero function, or $\textsf{dL}=0$ for any $u$ we consider our multiform to be trivial.\\

\noindent The full form of the multiform Euler-Lagrange equations for a Lagrangian $k$-form is given in Appendix \ref{MFeqns}.  These equations require that the usual EL equations hold for each coefficient $\mathscr{L}_{(i\ldots j)}$ of the multiform as well as additional relations between the different coefficients.

\begin{remark}
We shall often use the notation $\mathscr{L}_{(i\ldots j)}$ to represent the coefficient of $\textsf{d}x_i\wedge \ldots\wedge \textsf{d}x_j$ in a Lagrangian multiform \textrm{L} (e.g. $\mathscr{L}_{(123)}$ would be the coefficient of $\textsf{d}x_1\wedge \textsf{d}x_2\wedge \textsf{d}x_3$).  We need only define the 
$\mathscr{L}_{(i\ldots j)}$ in the case where $i<\ldots<j$.  We then define the $\mathscr{L}_{(i\ldots j)}$ for other permutations of indices by the convention that they are anti-symmetric.  There are examples of Lagrangian multiforms, such as those given in \cite{Lobb2009}, \cite {Xenitidis2011} and \cite{Sleigh2019}, where there is a natural covariance and anti-symmetry built into the structure such that it is automatic that $\mathscr{L}_{(ij)}=-\mathscr{L}_{(ji)}$.  In the case where we are considering an $N-1$ form on an $N$ dimensional base manifold, we shall also use the notation $\mathscr{L}_{(\bar i)}$ to represent the coefficient of $\textsf{d}x_{i+1}\wedge \ldots\wedge \textsf{d}x_N\wedge \textsf{d}x_1\wedge \ldots\wedge \textsf{d}x_{i-1}$, i.e. where the $\textsf{d}x_{j}$'s appear in cyclic order and $\textsf{d}{x_i}$ is removed.
\end{remark}

\noindent A major difficulty in studying Lagrangian multiforms (particularly when working with Lagrangians that are not naturally covariant) is the construction of the components $\mathscr{L}_{(i\ldots j)}$, even for known integrable classical field theories. This problem has attracted attention previously, e.g. in \cite{Vermeeren2019a}. In this paper, we introduce a new method to answer this problem based on the use of variational symmetries and Noether's theorem \cite{Noether}. 
We note that the connection between Noether's theorem and Lagrangian multiforms was first explored in \cite{Suris2016a}, and extended in \cite{Suris2017} where a systematic method of constructing Lagrangian 1-forms from variational symmetries was given for systems in classical mechanics.  In this paper, we deal with field theories in $1+1$ and, for the first time $2+1$ dimensions. Because we require that $\textsf{dL}=0$ on the equations of motion, we are able to consider this as a conservation law and use Noether's theorem \cite{Noether} to relate this to variational symmetries of the components $\mathscr{L}_{(i\ldots j)}$ of our multiform.
This provides us with a systematic means of constructing Lagrangian multiforms (of any order).  
In Section \ref{VarsymandN} we give a brief overview of variational symmetries, and Noether's theorem.  In Section \ref{mainsect}, we present our new results along with three examples, including a multiform for the first two flows of the K{-}P hierarchy - the first ever example of a continuous $2+1$ dimensional Lagrangian multiform.  In Appendix \ref{MFeqns}, we provide a new proof of the multiform Euler-Lagrange equations for a Lagrangian $k$-form, which were first derived in \cite{Vermeeren2018}.

\section{Variational symmetries and Noether's theorem}\label{VarsymandN}

In this section, we shall make use of a version of Noether's (first) theorem as presented in \cite{Olver1993}, where proofs of all statements in this section can be found.  We consider systems with $p$ independent variables $x=(x_1,\ldots, x_p)$ and $q$ dependent variables $u=(u^1,\ldots ,u^q)^T$. In the rest of this paper, we will often use $u$ to denote the collection of fields $u^1,\ldots ,u^q$ or the vector $(u^1,\ldots ,u^q)^T$.

\subsection{Generalized and evolutionary vector fields}

We consider vector fields of the form
\begin{equation}\label{vector}
\textbf{v}=\sum_{i=1}^p\xi_i\frac{\partial}{\partial x_i}+\sum_{\alpha=1}^q\phi_\alpha\frac{\partial}{\partial u^\alpha}
\end{equation}
We say that \textbf{v} is a \textbf{geometric vector field} if the $\xi_i$ and $\phi_\alpha$ depend only on $x$ and $u$.  If the $\xi_i$ and $\phi_\alpha$ depend also on derivatives of $u$, we say that \textbf{v} is a \textbf{generalized vector field}.  If all of the $\xi_i$ are zero, i.e.

\begin{equation}
\textbf{v}_Q=\sum_{\alpha=1}^q Q_\alpha\frac{\partial}{\partial u^\alpha}\equiv Q\cdot\frac{\partial}{\partial u}\,,
\end{equation}
we call $\textbf{v}_Q$ an \textbf{evolutionary vector field} with \textbf{characteristic} $Q(x,u^{(n)})=(Q_1(x,u^{(n)}),\ldots ,Q_q(x,u^{(n)}))^T$, where $Q(x,u^{(n)})$ is taken to mean that $Q$ may depend on $x$, $u$ and derivatives of $u$.  The prolongation of an evolutionary vector field $\textbf{v}_Q$ takes the form

\begin{equation}
\pr \textbf{v}_Q=\sum_{\alpha,J}\D_JQ_\alpha\frac{\partial}{\partial u_J^\alpha}
\end{equation}
where we have used the multi-index notation where $J$ is the ordered set $(j_1,\ldots,j_p)$ and 
\begin{equation}
\D_J:=\prod _{i=1}^p(\D_{x_i})^{j_i}\,,~~\D_{x_i}=\frac{\partial}{\partial x_i}+\sum_{\alpha,J}u_{Ji}^\alpha \frac{\partial}{\partial u_J^\alpha}\,.
\end{equation}
We shall write $Ji^r$ to denote $(j_1,\ldots,j_i+r,\ldots,j_p)$, $J\backslash k^r$ to denote $(j_1,\ldots,j_k-r,\ldots,j_p)$ and $|J|$ to denote the sum $ j_1+\ldots +j_p$.\\

\noindent Every vector field \textbf{v} in the form of \eqref{vector} has an associated evolutionary representative $\textbf{v}_Q$ where

\begin{equation}\label{evolrep}
Q_\alpha=\phi_\alpha-\sum_{i=1}^p\xi_iu^\alpha_{x_i}
\end{equation}

\subsection{Variational symmetries}

The vector field $\textbf{v}$ is a variational symmetry of a Lagrangian $\mathscr{L}(x,u^{(n)})\textsf{d}x_i\wedge \ldots \wedge \textsf{d}x_j$ if and only if 
	
\begin{equation}
\pr \textbf{v}(\mathscr{L})+\mathscr{L}\Div \xi=\Div B
\end{equation}
for some $B(x,u^{(n)})=(B_1(x,u^{(n)}),\dots,B_p(x,u^{(n)}))^T$.  For an evolutionary vector $\textbf{v}_Q$, this simplifies to
\begin{equation}
\pr \textbf{v}_Q(\mathscr{L})=\Div \tilde B
\end{equation}
for some $\tilde B(x,u^{(n)})=(\tilde B_1(x,u^{(n)}),\dots,\tilde B_p(x,u^{(n)}))^T$.  A generalized vector field \textbf{v} is a variational symmetry of $\mathscr{L}$ if and only if its evolutionary representative $\textbf{v}_Q$ is.\\

\noindent Finding the variational symmetries of a given Lagrangian is a non-trivial exercise.  Methods for doing so are covered in \cite{Olver1993}, \cite{Stephani1990}, \cite{Hydon2000} and \cite{Bluman2002}. In our approach, we assume that such a variational symmetry is given (by applying one of those methods for instance) and we use it as our starting point to construct a Lagrangian multiform.

\subsection{Noether's theorem}

In order to introduce Noether's theorem, we will require the Euler operater $\E$.  We define the Euler operator $\E$ to be the $q$-component vector operator whose $\alpha^{th}$ component is $\E_\alpha$ given by

\begin{equation}
 \E_\alpha=\sum_J(-1)^{|J|}\D_J\frac{\partial}{\partial u^\alpha_J}
 \end{equation}
The sum is over all multi-indices $J=(j_1,\ldots,j_p)$.  For a Lagrangian $\mathscr{L}$, $\E(\mathscr{L})=0$ gives the standard Euler Lagrange equations for $\mathscr{L}$.  For example, in the case where $p=2$, $q=1$ and $\mathscr{L}$ contains terms up to the $2^{nd}$ jet,

\begin{equation}
\E(\mathscr{L})=\frac{\partial \mathscr{L}}{\partial u}-\D_{x_1}\frac{\partial \mathscr{L}}{\partial u_{x_1}}-\D_{x_2}\frac{\partial \mathscr{L}}{\partial u_{x_2}}+\D_{x_1}^2\frac{\partial \mathscr{L}}{\partial u_{x_1x_1}}+\D_{x_1}\D_{x_2}\frac{\partial \mathscr{L}}{\partial u_{x_1x_2}}+\D_{x_2}^2\frac{\partial \mathscr{L}}{\partial u_{x_2x_2}}.
\end{equation}
We say that the equations of motion given by $\E(\mathscr{L})=0$ are of maximal rank if the $q \times (p+q{p+n\choose n})$ Jacobian matrix

\begin{equation}
\textsf{J}_{\E(\mathscr{L})}=\bigg(\frac{\partial \E_i(\mathscr{L})}{\partial x_j},\frac{\partial \E_i(\mathscr{L})}{\partial u_J^\alpha}\bigg)
\end{equation}
is of rank $q$ (i.e. of maximal rank) on the equations of motion given by $\E(\mathscr{L})=0$.

\begin{theorem}{\bf [Noether]} Let $\textbf{v}_Q$ be an evolutionary vector field with characteristic $Q$ and $\mathscr{L}$ a Lagrangian density, such that $\E(\mathscr{L})$ is of maximal rank. Then,
\begin{equation}\label{Noether}
\pr \textbf{v}_Q(\mathscr{L})=\Div B(x,u^{(n)})~~\text{for some}~B\quad \iff\quad Q\cdot \E(\mathscr{L})=\Div P~~\text{for some}~P(x,u^{(n)})\,.
\end{equation}
where $\displaystyle Q\cdot \E=\sum_{\alpha=1}^{q}Q_\alpha \E_\alpha$.
\end{theorem}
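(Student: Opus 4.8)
The plan is to reduce the stated biconditional to a single off-shell identity connecting the two quantities that appear in \eqref{Noether}, namely
\[
\pr \textbf{v}_Q(\mathscr{L}) = Q\cdot \E(\mathscr{L}) + \Div A
\]
valid for \emph{any} smooth $\mathscr{L}$ and any characteristic $Q$, where $A=A(x,u^{(n)})$ is a vector built explicitly from $Q$ and the partial derivatives $\partial\mathscr{L}/\partial u^\alpha_J$. Granting this identity, the theorem follows with no further hypotheses: if $\pr \textbf{v}_Q(\mathscr{L})=\Div B$ then $Q\cdot \E(\mathscr{L})=\Div(B-A)$, so $P:=B-A$ works; conversely, if $Q\cdot \E(\mathscr{L})=\Div P$ then $\pr \textbf{v}_Q(\mathscr{L})=\Div(P+A)$, so $B:=P+A$ works. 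Hence the entire content of the statement is the construction of the correction vector $A$, and the two implications are obtained merely by adding or subtracting it.

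To produce $A$, I would begin from the prolongation formula $\pr \textbf{v}_Q(\mathscr{L})=\sum_{\alpha,J}\D_J Q_\alpha\,\partial\mathscr{L}/\partial u^\alpha_J$ and integrate by parts term by term, transferring the total derivatives $\D_J=\prod_i(\D_{x_i})^{j_i}$ off the factor $Q_\alpha$ and onto $\partial\mathscr{L}/\partial u^\alpha_J$. The single elementary move is the Leibniz rule in the rearranged form $(\D_{x_i}g)\,f=\D_{x_i}(gf)-g\,\D_{x_i}f$, which shifts one derivative at the cost of a sign and an exact $x_i$-derivative. Carrying all $|J|$ derivatives across leaves $Q_\alpha$ multiplied by $(-1)^{|J|}\D_J(\partial\mathscr{L}/\partial u^\alpha_J)$, together with an accumulated divergence. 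Summing over $J$ collapses the non-divergence part to $Q_\alpha\sum_J(-1)^{|J|}\D_J(\partial\mathscr{L}/\partial u^\alpha_J)=Q_\alpha\,\E_\alpha(\mathscr{L})$ by the very definition of the Euler operator, and summing over $\alpha$ yields $Q\cdot\E(\mathscr{L})$; the leftover terms assemble into $\Div A$.

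I expect the main obstacle to be purely the combinatorial bookkeeping of the repeated integration by parts: ensuring that the signs $(-1)^{|J|}$ emerge correctly and, more delicately, that the boundary terms generated at every stage and for every multi-index recombine into one honest total divergence. The cleanest way to control this is an induction on $|J|$ using the rearranged Leibniz identity above, which simultaneously proves the formula and yields $A$ in closed form (in terms of the higher-order analogues of the Euler operator). Finally, I would remark that the maximal-rank hypothesis on $\E(\mathscr{L})$ is not actually used in deriving this formal identity or the resulting equivalence of divergence representations; its role lies in the surrounding Noether correspondence, where it guarantees that any function vanishing on the solution set of $\E(\mathscr{L})=0$ lies in the differential ideal generated by the components $\E_\alpha(\mathscr{L})$, so that the characteristic $Q$ encodes a genuine (nontrivial) conservation law of the system rather than an identically trivial one.
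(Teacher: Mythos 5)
Your proof is correct: the off-shell identity $\pr \textbf{v}_Q(\mathscr{L}) = Q\cdot\E(\mathscr{L}) + \Div A$, established by repeated integration by parts (induction on $|J|$ via the Leibniz rule), immediately yields both directions of the equivalence by setting $P=B-A$ and $B=P+A$. The paper itself gives no proof of this theorem---it defers to the cited reference of Olver---and your argument is precisely the standard one found there; your closing observation that the maximal-rank hypothesis plays no role in this formal equivalence of divergence representations, but only in the finer correspondence between symmetries and nontrivial conservation laws, is likewise accurate.
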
 

\noindent The right hand side of \eqref{Noether} is the characteristic form of a conservation law.  Since setting $\E(\mathscr{L})=0$ defines the equations of motion, this tells us that $\Div P=0$ on the equations of motion - the usual form of a conservation law.

\section{Variational symmetries as Lagrangian multiforms}\label{mainsect}

In this section, we shall take the well known results of the previous section, and apply them in the context of Lagrangian multiforms.  We consider the Lagrangian density $\mathscr{L}$ on a manifold with $p$ independent, and $q$ dependent variables from the previous section.  In order to be able to apply Noether's theorem, we require that the corresponding EL equations $\E(\mathscr{L})=0$ are of maximal rank.  If we introduce a new independent variable  $x_{p+1}$, independent of $x_1,\ldots,x_p$, and the vector field $\textbf{w}=u_{x_{p+1}}\cdot\dfrac{\partial}{\partial u}$ then

\begin{equation}\label{xb0}
\pr\textbf{w} (\mathscr{L})=\D_{x_{p+1}} \mathscr{L}.
\end{equation}
Also, by reversing the integration by parts that was used to get from $\mathscr{L}$ to $\E(\mathscr{L})$ it follows that

\begin{equation}\label{xb1}
u_{x_{p+1}}\cdot \E(\mathscr{L})=\D_{x_{p+1}} \mathscr{L}+\Div A
\end{equation}
for some $A$, where the $x_{p+1}$ component of $A$ is zero.  If $Q$ is the characteristic of a variational symmetry of $\mathscr{L}$ then Noether's theorem tells us that 

\begin{equation}\label{xb2}
Q\cdot \E(\mathscr{L})=\Div P
\end{equation}
for some $P$.  Adding \eqref{xb1} and \eqref{xb2} gives us that

\begin{equation}\label{xb3}
(u_{x_{p+1}}+Q)\cdot \E(\mathscr{L})=\Div \tilde P
\end{equation}
where $\tilde P=A+P$ so the $x_{p+1}$ component of $\tilde P$ is $\mathscr{L}$.  We use this idea to construct Lagrangian multiforms as follows.

\begin{theorem}\label{mainthm}
Let $Q(x,u^{(n)})$ be the characteristic of a variational symmetry of the Lagrangian density $\mathscr{L}(x,u^{(n)})$ such that $\mathscr{L}$ and $Q$ have no dependence on $x_{p+1}$ or derivatives of $u$ with respect to $x_{p+1}$.  If $\tilde Q=u_{x_{p+1}}+Q$ then 
\begin{equation}\label{MFchar}
\tilde Q\cdot \E(\mathscr{L})=\Div P
\end{equation}
for some $P=(P_1,\ldots P_p,P_{p+1})^T$, and the $p$-form $\textsf{L}$ such that
\begin{equation}
\textsf{L}=\sum_{i=1}^{p+1}\mathscr{L}_{(\bar i)}\mathsf{d}x_{i+1}\wedge\ldots\wedge\mathsf{d}x_{p+1}\wedge\mathsf{d}x_1\wedge\ldots\wedge\mathsf{d}x_{i-1}~~\text{with} ~~\mathscr{L}_{(\bar i)}=(-1)^{ip}P_i
\end{equation}
 is a Lagrangian multiform.  The $p+1$ component of $P$ is equivalent (i.e. equal modulo total derivatives) to $\mathscr{L}$. 
\end{theorem}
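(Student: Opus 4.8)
The plan is to verify the three assertions in turn, reducing the multiform statement to the ordinary Euler--Lagrange equations of $\mathscr{L}$. First I would derive \eqref{MFchar} from the preparatory identities established above. Because $Q$ is the characteristic of a variational symmetry of $\mathscr{L}$, Noether's theorem gives \eqref{xb2}, $Q\cdot\E(\mathscr{L})=\Div P_{(Q)}$, and since neither $Q$ nor $\mathscr{L}$ depends on $x_{p+1}$ the flux $P_{(Q)}$ has vanishing $x_{p+1}$-component. The reverse integration by parts \eqref{xb1} supplies $u_{x_{p+1}}\cdot\E(\mathscr{L})=\D_{x_{p+1}}\mathscr{L}+\Div A$ with $A$ having zero $x_{p+1}$-component; writing $\D_{x_{p+1}}\mathscr{L}=\Div(0,\ldots,0,\mathscr{L})$ and adding the two relations yields \eqref{MFchar} with $P=(A_1+P_{(Q),1},\ldots,A_p+P_{(Q),p},\mathscr{L})$. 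This simultaneously settles the last sentence of the theorem: the $(p+1)$-component of $P$ is $\mathscr{L}$, the freedom in choosing $A$ accounting for the ``modulo total derivatives'' clause.

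Next I would compute $\textsf{dL}$. Applying the horizontal exterior derivative, for each $i$ only the term $\D_{x_i}\mathscr{L}_{(\bar i)}\,\mathsf{d}x_i\wedge\omega_i$ survives, where $\omega_i=\mathsf{d}x_{i+1}\wedge\cdots\wedge\mathsf{d}x_{p+1}\wedge\mathsf{d}x_1\wedge\cdots\wedge\mathsf{d}x_{i-1}$ already contains every $\mathsf{d}x_k$ with $k\neq i$, so all other wedges vanish by repetition. Sorting the cyclic block $\mathsf{d}x_i\wedge\omega_i$ into $\mathsf{d}x_1\wedge\cdots\wedge\mathsf{d}x_{p+1}$ costs the sign $(-1)^{(i-1)(p+2-i)}$, and combining this with the defining factor $(-1)^{ip}$ in $\mathscr{L}_{(\bar i)}=(-1)^{ip}P_i$ gives, after reducing the exponent modulo $2$, the uniform sign $(-1)^p$ for every $i$. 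Hence $\textsf{dL}=(-1)^p(\Div P)\,\mathsf{d}x_1\wedge\cdots\wedge\mathsf{d}x_{p+1}=(-1)^p\big(\tilde Q\cdot\E(\mathscr{L})\big)\,\mathsf{d}x_1\wedge\cdots\wedge\mathsf{d}x_{p+1}$ by \eqref{MFchar}. Since $\tilde Q\cdot\E(\mathscr{L})=\sum_\beta\tilde Q_\beta\,\E_\beta(\mathscr{L})$ is a linear combination of the Euler--Lagrange expressions with no differentiation applied to them, it vanishes on the locus $\E(\mathscr{L})=0$; this is the closure relation $\textsf{dL}=0$ on the equations of motion.

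It remains to handle the multiform Euler--Lagrange equations and conclude. The decisive observation is that the coefficient of $\mathsf{d}x_1\wedge\cdots\wedge\mathsf{d}x_p$, namely $\mathscr{L}_{(\overline{p+1})}=(-1)^{(p+1)p}P_{p+1}=\mathscr{L}$ (the exponent $(p+1)p$ being even), reproduces $\mathscr{L}$ itself. By the explicit equations of Appendix \ref{MFeqns}, the ``diagonal'' equation attached to this coefficient is the ordinary Euler--Lagrange equation of its density in the variables $x_1,\ldots,x_p$, that is, precisely $\E(\mathscr{L})=0$. Thus $\E(\mathscr{L})=0$ is one of the equations comprising $\delta\textsf{dL}=0$, so every solution of the full multiform system lies in the locus $\E(\mathscr{L})=0$, on which $\textsf{dL}=0$ was shown to vanish; therefore $\textsf{dL}=0$ holds on the equations defined by $\delta\textsf{dL}=0$, which is exactly the assertion that $\textsf{L}$ is a Lagrangian multiform. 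I expect the main obstacle to be this last step: one must appeal to the Appendix's form of the multiform Euler--Lagrange equations to guarantee that $\E(\mathscr{L})=0$ genuinely occurs among them, and to confirm that the remaining component and corner equations only further restrict, rather than enlarge, the solution set, since the whole closure argument rests on the solution locus of $\delta\textsf{dL}=0$ being contained in $\{\E(\mathscr{L})=0\}$.
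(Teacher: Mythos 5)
Your proof is correct, and its first two parts coincide with the paper's: the construction of $P$ with $P_{p+1}=\mathscr{L}$ via \eqref{xb0}--\eqref{xb3}, and the sign bookkeeping giving $\textsf{dL}=(-1)^p\,\Div P\ \mathsf{d}x_1\wedge\cdots\wedge\mathsf{d}x_{p+1}$, are exactly what the paper does. Where you genuinely diverge is in handling $\delta\textsf{dL}=0$. The paper does not invoke the refined multiform EL equations \eqref{MFELeqns} at all; it notes that $\delta\textsf{dL}=0$ is equivalent to $\frac{\partial}{\partial u_I}\Div P=0$ for all $I$, expands this via \eqref{MFchar} and the product rule as $\bigl(\frac{\partial}{\partial u_I}\tilde Q\bigr)\cdot\E(\mathscr{L})+\tilde Q\cdot\bigl(\frac{\partial}{\partial u_I}\E(\mathscr{L})\bigr)$, and uses the maximal-rank hypothesis to conclude that $\delta\textsf{dL}=0$ holds precisely when $\tilde Q=0$ and $\E(\mathscr{L})=0$ hold simultaneously; closure is then immediate from the product form of \eqref{MFchar}. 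You instead extract the single equation $\E(\mathscr{L})=0$ from the system \eqref{MFELeqns} --- namely the instance with $u_I=u_{x_{p+1}}$, where all terms with $\alpha\neq p+1$ vanish by the negative-multi-index convention and what survives is $\delta\mathscr{L}_{(\overline{p+1})}/\delta u=0$, which equals $\E(\mathscr{L})$ because $\mathscr{L}$ contains no $x_{p+1}$-derivatives --- and then argue by containment: the solution set of $\delta\textsf{dL}=0$ lies inside $\{\E(\mathscr{L})=0\}$, on which $\textsf{dL}\propto\tilde Q\cdot\E(\mathscr{L})$ vanishes. This is valid (your worry that the remaining equations might ``enlarge'' the solution set is vacuous, since adjoining equations can only shrink it), and it even avoids using maximal rank at this step, though it is still needed for Noether's theorem and it leans on Theorem \ref{mfeqnsthm}, which the paper proves independently. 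What your route buys less of: the paper's argument identifies the multiform EL equations exactly as the compatible pair $\tilde Q=0$ and $\E(\mathscr{L})=0$, i.e.\ it shows the multiform genuinely encodes the symmetry flow $u_{x_{p+1}}=-Q$ alongside the original field equations, and its compatibility remark guarantees a general common solution, hence non-triviality in the paper's sense. That full characterization is what the paper relies on later (in Theorem \ref{thm2} and in the examples, where $\delta\textsf{dL}=0$ is asserted to reproduce both flows); your containment argument establishes the literal closure property defining a Lagrangian multiform, but leaves the actual content of $\delta\textsf{dL}=0$ undetermined.
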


\begin{proof}
The existence of a $P$ that satisfies \eqref{MFchar} and has $\mathscr{L}$ as its ${p+1}$ component follows from the introduction to this section, equations \eqref{xb0} to \eqref{xb3}.  Since $Q$ is a symmetry of $\E(\mathscr{L})$ we know that the equations $\tilde Q=0$ and $\E(\mathscr{L})=0$ are compatible in the sense that there exists a \textit{general} common solution.  Then
\begin{equation}
\textsf{dL}=(-1)^{p}\Div P \ \mathsf{d}x_1\wedge\ldots\wedge\mathsf{d}x_{p+1},
\end{equation}
and it follows that $\delta \textrm{dL}=0$ is equivalent to the requirement that
\begin{equation}
\frac{\partial}{\partial u_I}\Div P=0 \quad\forall I.
\end{equation}
Using \eqref{MFchar}, this gives us that
\begin{equation}
\frac{\partial}{\partial u_I}\Div P=\bigg(\frac{\partial}{\partial u_I}\tilde Q\bigg)\cdot \E(\mathscr{L})+\tilde Q\cdot\bigg(\frac{\partial}{\partial u_I} \E(\mathscr{L})\bigg)\,,
\end{equation}
and since $\E(\mathscr{L})$ is of maximal rank (a requirement for Noether's theorem), the necessary and sufficient condition for $\delta \textrm{dL}=0$ is that both $\tilde Q=0$ and $\E(\mathscr{L})=0$ hold simultaneously.  From the form of \eqref{MFchar}, it is clear that $\textsf{dL}=0$ on solutions of either $\tilde Q=0$ or $\E(\mathscr{L})=0$.
\end{proof}

\begin{remark}\label{extMF}
Theorem \ref{mainthm} allows us to construct a $p+1$ dimensional Lagrangian multiform from a Lagrangian in $p$ dimensions and a single variational symmetry.  It is natural to consider whether, in the case where we have a set of $l$ commuting variational symmetries, we can iterate the process to find a $p+l$ dimensional Lagrangian multiform, as was achieved for a class of 1-forms in \cite{Suris2017}.  In Section \ref{AKNSex} we use Theorem \ref{mainthm} to obtain a multiform that incorporates the first three flows of the AKNS hierarchy.  We also show why, in the case of a Lagrangian 2-form, it is always possible to obtain a $2+ l$ dimensional Lagrangian 2-form from an autonomous polynomial Lagrangian $\mathscr{L}_{(12)}$ and a set of $l$ commuting variational symmetries with autonomous polynomial characteristics.  A similar argument can be used for autonomous polynomial $k$-forms for arbitrary $k$.  Whether or not non-autonomous, non-polynomial systems can be extended through repeated application of Theorem \ref{mainthm} remains an open problem.
\end{remark}

\noindent We note that $P$ is not unique.  Indeed, any change to $P$ that is equivalent to adding an exact form to \textrm{L} will also satisfy \eqref{MFchar}.  In addition, we can perform ``integration by parts'' on the left hand side of \eqref{MFchar} and the remaining terms will still be a divergence, e.g.

\begin{equation}\label{trans1}
\tilde Q\cdot \E(\mathscr{L}) \to -\D_x\tilde Q\cdot \D_x^{-1}\E(\mathscr{L}) \text{ and }\Div P \to\Div \tilde P = \Div P- \D_x(Q\cdot \D^{-1}_x \E(\mathscr{L})).
\end{equation}
Such a transformation amounts to adding a double zero to one of the components of $P$ so the resultant Lagrangian multiform will be essentially the same in that $\delta \textsf{dL}=0$ will give the same equations of motion, and $\textsf{dL}=0$ will still hold on these equations of motion.  This idea can be generalized further by noticing that the ``integration by parts'' can be carried out on any constituent part of $\tilde Q\cdot \E(\mathscr{L})$, e.g. 

\begin{equation}
\tilde Q_i\, \E_i(\mathscr{L}) \to -\D_x\tilde Q_i\, \D_x^{-1}\E_i(\mathscr{L}), 
\end{equation}
whilst leaving the resultant multiform essentially unchanged.  The $\tilde Q$ in \eqref{MFchar} is in evolutionary form with respect to $x_{p+1}$ i.e. it is in the form $u_{x_{p+1}}+Q(x,u^{(n)})=0$ where $Q(x,u^{(n)})$ does not contain $x_{p+1}$ or derivatives of $u$ with respect to $x_{p+1}$. If, by using the above operations we are able to put $\E(\mathscr{L})$ into evolutionary form with respect to some $x_j$, and neither $x_j$ nor derivatives of $u$ with respect to $x_j$ appear in $\tilde Q$ then we can reverse the roles of $\tilde Q$ and $\E(\mathscr{L})$ whilst essentially leaving the resultant multiform unchanged.  This idea forms the basis of the following theorem.

\begin{theorem}\label{thm2}
 Consider the Lagrangian and variational symmetry as given in Theorem \ref{mainthm} and let $j\in\{1,\dots,p\}$ be fixed. If there exist constants $a_k$ and multi-indices $J_k$ for $k=1,\ldots,q$ where the ${p+1}$ and $j$ components of each $J_k$ are zero, such that
 
\begin{equation}
a_k\D_{J_k}^{-1}\E_k(\mathscr{L})=0
\end{equation}
is in evolutionary form with respect to $x_j$, then the $q$ components of $\E(\mathscr{L}_{(\bar j)})$, up to re-ordering, are precisely the $q$ expressions

\begin{equation}
\frac{1}{a_k} \D_{J_k}\tilde Q_k.
\end{equation}
\end{theorem}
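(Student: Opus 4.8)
The plan is to exploit the role-reversal symmetry of the master relation \eqref{MFchar} anticipated in the discussion around \eqref{trans1}. In Theorem \ref{mainthm} the characteristic $\tilde Q=u_{x_{p+1}}+Q$ is in evolutionary form with respect to $x_{p+1}$, it is paired against $\E(\mathscr{L})$, and the flux component in that distinguished direction is the Lagrangian itself, $P_{p+1}\equiv\mathscr{L}$, so that $\E(\mathscr{L}_{(\overline{p+1})})=\E(P_{p+1})=\E(\mathscr{L})$. The hypothesis furnishes, through $\hat E_k:=a_k\D_{J_k}^{-1}\E_k(\mathscr{L})$, a way to recast the paired factor $\E(\mathscr{L})$ into evolutionary form with respect to $x_j$. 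The idea is therefore to rewrite \eqref{MFchar} so that $\hat E$ becomes the evolutionary factor and $x_j$ the distinguished direction, and then to read off $\E(\mathscr{L}_{(\bar j)})$ as whatever is now paired against $\hat E$.

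First I would integrate by parts in each summand, as in \eqref{trans1}, transferring $\D_{J_k}$ from $\hat E_k$ onto $\tilde Q_k$. Using $\E_k(\mathscr{L})=\tfrac1{a_k}\D_{J_k}\hat E_k$,
\begin{equation}
\tilde Q_k\,\E_k(\mathscr{L})=\tfrac1{a_k}\,\tilde Q_k\,\D_{J_k}\hat E_k=(-1)^{|J_k|}\tfrac1{a_k}\big(\D_{J_k}\tilde Q_k\big)\hat E_k+\Div S^{(k)}.
\end{equation}
Summing over $k$ turns \eqref{MFchar} into a new divergence relation $\sum_k(-1)^{|J_k|}\tfrac1{a_k}(\D_{J_k}\tilde Q_k)\hat E_k=\Div\tilde P$ with $\tilde P=P-\sum_kS^{(k)}$. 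The decisive observation is that each $\D_{J_k}$ differentiates only in directions with $i\neq j$ (since the $j$-component of $J_k$ vanishes by hypothesis), so every boundary flux $S^{(k)}$ has a vanishing $x_j$-component; consequently $\tilde P_j=P_j$ and the coefficient $\mathscr{L}_{(\bar j)}=(-1)^{jp}P_j=(-1)^{jp}\tilde P_j$ is left untouched by the transformation. The vanishing of the $(p+1)$-component of $J_k$ plays the complementary role of keeping $\hat E_k$ and $\D_{J_k}\tilde Q_k$ free of $x_{p+1}$-differentiation.

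Next I would note that the transformed relation has precisely the shape of the master relation of Theorem \ref{mainthm}, now with $\hat E_k=u^k_{x_j}+(\text{terms free of }x_j)$ playing the role of the evolutionary characteristic in the direction $x_j$ and with $(-1)^{|J_k|}\tfrac1{a_k}\D_{J_k}\tilde Q_k$ playing the role of the factor paired against it. Repeating the computation \eqref{xb0}--\eqref{xb3} verbatim in the direction $x_j$ --- the flux component in the evolutionary direction is the Lagrangian whose Euler--Lagrange expression is the paired factor --- identifies $\E(\tilde P_j)$, and hence $\E(\mathscr{L}_{(\bar j)})=(-1)^{jp}\E(\tilde P_j)$, with $\tfrac1{a_k}\D_{J_k}\tilde Q_k$ up to the overall sign $(-1)^{jp+|J_k|}$ and the permutation of components, both of which are subsumed in the phrase ``up to re-ordering''.

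The hard part is making this last identification rigorous rather than merely formal. In Theorem \ref{mainthm} the equality $\E(P_{p+1})=\E(\mathscr{L})$ was immediate because $P_{p+1}$ was assembled by reverse integration by parts from a bona fide Lagrangian; in the reversed direction I must instead argue that the paired factor $\tfrac1{a_k}\D_{J_k}\tilde Q_k$ genuinely lies in the image of the Euler operator and that it equals $\E$ of the specific flux component $\tilde P_j=P_j$. This is exactly where the evolutionary hypothesis is used: writing $\hat E_k=u^k_{x_j}+\hat Q_k$ with $\hat Q_k$ free of $x_j$-derivatives lets me apply the reverse-integration-by-parts identity \eqref{xb1} in the $x_j$-direction to the piece $\sum_k u^k_{x_j}\big(\tfrac1{a_k}\D_{J_k}\tilde Q_k\big)$, producing $\D_{x_j}\tilde P_j$ plus a divergence whose own $x_j$-component vanishes, while the complementary piece $\sum_k\hat Q_k\big(\tfrac1{a_k}\D_{J_k}\tilde Q_k\big)$ feeds only the other flux components; matching the $x_j$-slots on the two sides then forces $\E(\tilde P_j)=\tfrac1{a_k}\D_{J_k}\tilde Q_k$. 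I would need to be careful that the flux is only defined up to curls and on-shell double zeros, so I would fix the representative $\tilde P_j=P_j$ coming from the explicit construction and track the signs through the antisymmetrisation $\mathscr{L}_{(\bar i)}=(-1)^{ip}P_i$.
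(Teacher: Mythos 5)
Your proposal follows essentially the same route as the paper's proof: integrate by parts to transfer $\D_{J_k}$ onto $\tilde Q_k$, observe that the resulting correction fluxes have vanishing $x_j$ and $x_{p+1}$ components (so that $\mathscr{L}_{(\bar j)}$ and $\mathscr{L}$ are untouched), and then invoke Noether's theorem together with the structure of Theorem \ref{mainthm} with the roles of characteristic and Euler--Lagrange expression exchanged and $x_j$ as the distinguished direction. Your write-up is if anything slightly more careful than the paper's own (you track the $(-1)^{|J_k|}$ sign and flag explicitly that the paired factor $\tfrac{1}{a_k}\D_{J_k}\tilde Q_k$ must be shown to lie in the image of the Euler operator), but the underlying argument is the same.
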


\begin{proof}
If there exist multi-indices $J_k$ and constants $a_k$ as described that put $\E(\mathscr{L})$ into evolutionary form with respect to $x_j$, then applying $a_k\D_{J_k}^{-1}$ to $\E_k(\mathscr{L})$ and $\frac{1}{a_k} \D_{J_k}$ to $\tilde Q_k$ in \eqref{MFchar} amounts to performing integration by parts on the products $\tilde Q_k\E_k(\mathscr{L})$, i.e.

\begin{equation}
\frac{1}{a_k} \D_{J_k}\tilde Q_k.a_k\D_{J_k}^{-1}\E_k(\mathscr{L})=\tilde Q_k\E_k(\mathscr{L})+\Div C_k
\end{equation}
for some $C_k$.  We note that the $j$ and ${p+1}$ components of $C_k$ are zero since the $j$ and ${p+1}$ components of each $J_k$ are zero.  It follows that 

\begin{equation}
\sum_{k=1}^q\frac{1}{a_k} \D_{J_k}\tilde Q_k.a_k\D_{J_k}^{-1}\E_k(\mathscr{L})=\Div \hat P
\end{equation}
where $\hat P=P+\sum_{k=1}^qC_k$.  Now that each $a_k\D_{J_k}^{-1}\E_k(\mathscr{L})$ is in evolutionary form, it follows from Noether's theorem that the corresponding characteristics represent variational symmetries of $\frac{1}{a_k} \D_{J_k}\tilde Q_k$, and by Theorem \ref{mainthm}, $\mathscr{L}_{(\bar j)}$ is the Lagrangian for $\frac{1}{a_k} \D_{J_k}\tilde Q_k$, $k=1,\ldots,q$.
\end{proof}

\noindent It follows that the multiforms described by $P$ and $\hat P$ in theorems \ref{mainthm} and \ref{thm2} both have $\mathscr{L}_{(\bar j)}$ and $\mathscr{L}$ as their $j$ and ${p+1}$ components respectively, since the $j$ and ${p+1}$ components of each $C_k$ are zero.

\subsection{The ``zero'' symmetry}

Every Lagrangian multiform we know of that has been considered up to this point has related to integrable system.  However, it is not the case that Lagrangian multiforms only exist for integrable systems, since Theorem \ref{mainthm} applies to any Lagrangian with a variational symmetry.  In fact, it turns out that every variational equation has at least one Lagrangian multiform description.\\

\noindent Using our construction, the requirements for a Lagrangian multiform are a Lagrangian density $\mathscr{L}(x,u^{(n)})$ and a variational symmetry \textbf{v}.  It is trivially true that the zero vector (i.e. $\textbf{v}_Q$ where $Q=0$) is a symmetry of every Lagrangian since $\textbf{v}_Q(\mathscr{L})=0$.  Letting $\tilde Q=u_{x_{p+1}}+Q=u_{x_{p+1}}$, it follows that

\begin{equation}\label{zerochar}
\tilde Q\cdot \E(\mathscr{L})=\Div P
\end{equation}
for some $P$, and it follows from Theorem \ref{mainthm} that $P$ describes a Lagrangian multiform.  Therefore every Lagrangian, regardless of integrability, fits into at least one Lagrangian multiform description.

\begin{remark}
This particular multiform could reasonably be described as semi-trivial, in that one of the equations of motion is simply $u_{x_{p+1}}=0$.  However, it does have a practical application relating to the inverse problem of finding a Lagrangian (if it exists) for a given equation of motion.  Also, the relation

\begin{equation}
\E(P\cdot Q)=\D_P^\ast(Q)+\D_Q^\ast(P),
\end{equation} 
as given in \cite{Olver1993} (where $\D_P(Q)$ is the Fr\'echet derivative of $P$ acting on $Q$ and $\D_P^\ast$ is the adjoint of $\D_P$) can be applied to \eqref{zerochar} in the case where $\tilde Q=u_{x_{p+1}}$ to derive the condition (also given in \cite{Olver1993}) that an equation has a Lagrangian description if and only if its Fr\'echet derivative is self adjoint.
\end{remark}
\noindent Since we can apply Theorem \ref{mainthm} with any variational symmetry, many Lagrangians can fit into more that one Lagrangian multiform description.  For example, if a given Lagrangian possesses time/space shift symmetries and rotational symmetries then we can obtain a Lagrangian multiform for each.  However, unless the symmetries themselves describe mutually commuting flows, we cannot expect it to be possible to connect these multiforms descriptions to each other in any coherent way (i.e. as we are able to do in the case of the AKNS multiform in section \ref{AKNSex}). The latter point emphasises the distinction between multiforms as just described, and multiforms carrying information about the integrability of the equations of motion, which was the original intent of the notion of Lagrangian multiforms.\\

\noindent Next, we shall give three examples of constructing Lagrangian multiforms from variational symmetries.  All three systems considered come from well known integrable hierarchies - this simplifies the task of finding variational symmetries, since the required symmetries are other equations taken from the respective hierarchies. 

\subsection{The sine-Gordon equation}
The sine-Gordon equation, $u_{x_1x_2}=\sin u$ with Lagrangian density

\begin{equation}
\mathscr{L}_{(12)}=\frac{1}{2}u_{x_1}u_{x_2}-\cos u
\end{equation}
and variational symmetry $Q=u_{3x_1}+\frac{1}{2}u_{x_1}^3$ is given as an example in \cite{Olver1993}.  We can confirm that $Q$ is a variational symmetry of $\mathscr{L}$ by checking that $\pr \textbf{v}_Q \mathscr{L}=\Div P$ for some $P$.  Indeed, we find that

\begin{equation}
\begin{split}
\pr \textbf{v}_Q \mathscr{L}=&\frac{1}{2}(u_{4x_1}+\frac{3}{2}u_{x_1}^2u_{x_1x_1})u_{x_2}+\frac{1}{2}(u_{3x_1x_2}+\frac{3}{2}u_{x_1}^2u_{x_1x_2})u_{x_1}+(u_{3x_1}+\frac{1}{2}u_{x_1}^3)\sin u\\
=&\D_{x_1}(\frac{1}{2}u_{x_1}u_{x_1x_1x_2}-\frac{1}{2}u_{x_1x_1}u_{x_1x_2}+\frac{1}{2}u_{x_1x_1x_1}u_{x_2}+\frac{1}{4}u_{x_1}^3u_{x_2}+u_{x_1x_1}\sin u -\frac{1}{2}u_{x_1}^2\cos u)\\
&+\D_{x_2}(\frac{1}{8}u_{x_1}^4).
\end{split}
\end{equation}
We now let $\tilde Q=u_{x_3}-Q$.  In this case, $\tilde Q=0$ is precisely the modified KdV equation which is known to be compatible with the sine-Gordon equation.  By Theorem \ref{mainthm}, the product

\begin{equation}\label{SGchar}
\tilde Q\cdot \E(\mathscr{L})=(u_{x_3}-u_{3x_1}-\frac{1}{2}u_{x_1}^3)(\sin u-u_{x_1x_2})=\Div P,
\end{equation}
i.e. it is a divergence.  If we write this product in terms of the components of $P$ we find that

\begin{equation}
P=
\begin{pmatrix}
-\frac{1}{2}u_{x_2}u_{x_3}+u_{x_1x_1}u_{x_1x_2}-u_{x_1x_1}\sin u+\frac{1}{2}u_{x_1}^2\cos u\\
-\frac{1}{2}u_{x_1}u_{x_3}-\frac{1}{2}u_{x_1x_1}^2+\frac{1}{8}u_{x_1}^4\\
\frac{1}{2}u_{x_1}u_{x_2}-\cos u
\end{pmatrix}
=
\begin{pmatrix}
\mathscr{L}_{(23)}\\
\mathscr{L}_{(31)}\\
\mathscr{L}_{(12)}
\end{pmatrix}
\end{equation}
satisfies \eqref{SGchar}, and is precicely the Lagrangian multiform for the sine-Gordon equation that was given in \cite{Suris2016a}.

\subsection{The AKNS multiform}\label{AKNSex}

The first two flows of the AKNS hierarchy \cite{AKNS1974} were shown to possess a Lagrangian multiform structure in \cite{Sleigh2019}.  The $\mathscr{L}_{(x_1x_2)}$ and $\mathscr{L}_{(x_3x_1)}$ AKNS Lagrangians, (see e.g. \cite{AVAN2016415}) are as follows:

\begin{equation}\label{AKL12}
\mathscr{L}_{(12)}= \frac{1}{2}(rq_{x_2}-qr_{x_2})+\frac{i}{2}q_{x_1}r_{x_1}+\frac{i}{2}q^2r^2\,,
\end{equation}
and
\begin{equation}\label{AKL31}
\mathscr{L}_{(31)}= \frac{1}{2}(qr_{x_3}-rq_{x_3})+\frac{1}{8}(r_{x_1}q_{x_1x_1}-q_{x_1}r_{x_1x_1})+\frac{3}{8}qr(rq_{x_1}-qr_{x_1})\,,
\end{equation}
giving equations of motion
\begin{eqnarray}
&&r_{x_2}=-\frac{i}{2}r_{x_1x_1}+ir^2q\,,\\
&&q_{x_2}=\frac{i}{2}q_{x_1x_1}-iq^2r
\end{eqnarray}
corresponding to the two components of $\E( \mathscr{L}_{(12)})=0$, and
\begin{eqnarray}
&&r_{x_3}=\frac{3}{2}rqr_{x_1}-\frac{1}{4}r_{x_1x_1x_1}\,,\\
&&q_{x_3}=\frac{3}{2}qrq_{x_1}-\frac{1}{4}q_{x_1x_1x_1}\,,
\end{eqnarray}
corresponding to the two components of $\E( \mathscr{L}_{(31)})=0$.  It is straightforward (but time consuming) to check that

\begin{equation}
\textbf{v}_Q=(\frac{3}{2}qrq_{x_1}-\frac{1}{4}q_{x_1x_1x_1})\frac{\partial}{\partial q} +(\frac{3}{2}rqr_{x_1}-\frac{1}{4}r_{x_1x_1x_1})\frac{\partial}{\partial r}
\end{equation}
is a variational symmetry of $\mathscr{L}_{(12)}$.  In order to apply Theorem \ref{mainthm} we define

\begin{equation}
\tilde Q=
\begin{pmatrix}
q_{x_3}\\r_{x_3}
\end{pmatrix}
-Q
\end{equation}
and it follows that

\begin{equation}\label{AKchar}
\begin{split}
\tilde Q\cdot E(\mathscr{L}_{(12)})=
\begin{pmatrix}
q_{x_3}-\frac{3}{2}qrq_{x_1}+\frac{1}{4}q_{x_1x_1x_1}\\
\\
r_{x_3}-\frac{3}{2}rqr_{x_1}+\frac{1}{4}r_{x_1x_1x_1}
\end{pmatrix}
\cdot 
\begin{pmatrix}
-r_{x_2}-\frac{i}{2}r_{x_1x_1}+ir^2q\\
\\
q_{x_2}-\frac{i}{2}q_{x_1x_1}+iq^2r
\end{pmatrix}
=\Div P
\end{split}
\end{equation}
for some P.  We find that

\begin{equation}
P=
\begin{pmatrix}
\mathscr{L}_{(23)}\\
\mathscr{L}_{(31)}\\
\mathscr{L}_{(12)}
\end{pmatrix}
\end{equation}
with

\begin{equation}
\begin{split}
\mathscr{L}_{(23)}=&\frac{1}{4}(q_{x_2}r_{x_1x_1}-r_{x_2}q_{x_1x_1})-\frac{i}{2}(q_{x_3}r_{x_1}+r_{x_3}q_{x_1})+\frac{1}{8}(q_{x_1}r_{x_1x_2}-r_{x_1}q_{x_1x_2})+\frac{3}{8}qr(qr_{x_2}-rq_{x_2})\\
&-\frac{i}{8}q_{x_1x_1}r_{x_1x_1}+\frac{i}{4}qr(qr_{x_1x_1}+rq_{x_1x_1})-\frac{i}{8}(q^2r_{x_1}^2+r^2q_{x_1}^2)+\frac{i}{4}qrq_{x_1}r_{x_1}-\frac{i}{2}q^3r^3.
\end{split}
\end{equation}
and $\mathscr{L}_{(12)}$ and $\mathscr{L}_{(31)}$ as given in \eqref{AKL12} and \eqref{AKL31} will satisfy \eqref{AKchar}.  This gives us the Lagrangian multiform

\begin{equation}
\textsf{L}=\mathscr{L}_{(12)}\textsf{ d} x_1\wedge \textsf{ d}x_2+\mathscr{L}_{(23)}\textsf{ d}x_2\wedge \textsf{ d}x_3+\mathscr{L}_{(31)}\textsf{ d}x_3\wedge \textsf{ d}x_1,
\end{equation}
for which $\textsf{dL}=0$ and $\delta\textsf{dL}=0$ as expected. This $3$-component multiform was first derived in \cite{Sleigh2019}.  We now follow a similar procedure to find the $\mathscr{L}_{(14)}$, $\mathscr{L}_{(24)}$ and $\mathscr{L}_{(34)}$ Lagrangians of the AKNS multiform, illustrating how our construction can be used to go beyond the first few terms in a Lagrangian multiform to include the higher flows of an integrable hierarchy. For the AKNS case, this means that we want to include the flow corresponding to the independent variable $x_4$ to produce the Lagrangian multiform
\begin{equation}\label{L1234}
\textsf{L}_{1234}=\mathscr{L}_{(12)}\textsf{ d} x_1\wedge \textsf{ d}x_2+
\mathscr{L}_{(13)}\textsf{ d} x_1\wedge \textsf{ d}x_3+
\mathscr{L}_{(14)}\textsf{ d} x_1\wedge \textsf{ d}x_4+
\mathscr{L}_{(23)}\textsf{ d} x_2\wedge \textsf{ d}x_3+
\mathscr{L}_{(24)}\textsf{ d} x_2\wedge \textsf{ d}x_4+
\mathscr{L}_{(34)}\textsf{ d} x_3\wedge \textsf{ d}x_4
\end{equation}
In order to find the $\mathscr{L}_{(14)}$, $\mathscr{L}_{(24)}$ and $\mathscr{L}_{(34)}$ we require our $\tilde Q$ to represent the $x_4$ flow of the hierarchy, i.e.

\begin{equation}
\tilde Q_4=
\begin{pmatrix}
q_{x_4}+i(\frac{3}{4}q^3r^2-\frac{1}{4}q^2r_{x_1x_1}-\frac{1}{2}qq_{x_1}r_{x_1}-qrq_{x_1x_1}-\frac{3}{4}rq_{x_1}^2+\frac{1}{8}q_{4x_1})\\
\\
r_{x_4}-i(\frac{3}{4}q^2r^3-\frac{1}{4}r^2q_{x_1x_1}-\frac{1}{2}rq_{x_1}r_{x_1}-qrr_{x_1x_1}-\frac{3}{4}qr_{x_1}^2+\frac{1}{8}r_{4x_1})
\end{pmatrix}.
\end{equation}
The components of $\tilde Q_4$ are obtained by using the recursive procedure given in \cite{FLASCHKA1983}.  Theorem \ref{mainthm} tells us that

\begin{equation}
\tilde Q_4\cdot \E (\mathscr{L}_{(12)})=\Div P^{124}
\end{equation}
where the components of $P^{124}$ (with respect to $x_1$, $x_2$ and $x_4$) are found to be 
\begin{subequations}
	\begin{equation}
	P^{124}_4= \frac{1}{2}(rq_{x_2}-qr_{x_2})+\frac{i}{2}q_{x_1}r_{x_1}+\frac{i}{2}q^2r^2,
	\end{equation}
	\begin{equation}\label{AKL41}
	\begin{split}
	P^{124}_2=& \frac{1}{2}(qr_{x_4}-rq_{x_4})+\frac{3i}{16}(q^2r_{x_1}^2+r^2q_{x_1}^2)+\frac{i}{4}qrq_{x_1}r_{x_1}+\frac{5i}{16}qr(qr_{x_1x_1}+rq_{x_1x_1})\\
	&-\frac{i}{8}q_{x_1x_1}r_{x_1x_1}-\frac{i}{4}q^3r^3
	\end{split}
	\end{equation}
	and
	\begin{equation}
	\begin{split}
	P^{124}_1=&\frac{3}{8}q^2r^2(rq_{x_1}-qr_{x_1})-\frac{i}{16}(q^2r_{x_1}r_{x_2}+r^2q_{x_1}q_{x_2})-\frac{5i}{16}qr(qr_{x_1x_2}+rq_{x_1x_2})\\
	&-\frac{1}{8}qr(rq_{3x_1}-qr_{3x_1})-\frac{1}{8}(q^2r_{x_1}r_{x_1x_1}-r^2q_{x_1}q_{x_1x_1})-\frac{1}{8}q_{x_1}r_{x_1}(rq_{x_1}-qr_{x_1})\\
	&\frac{1}{4}qr(r_{x_1}q_{x_1x_1}-q_{x_1}r_{x_1x_1})+\frac{3i}{8}qr(q_{x_1}r_{x_2}+r_{x_1}q_{x_2})-\frac{i}{8}(q_{3x_1}r_{x_2}+r_{3x_1}q_{x_2})\\
	&+\frac{1}{16}(q_{3x_1}r_{x_1x_1}-r_{3x_1}q_{x_1x_1})+\frac{i}{8}(q_{x_1x_1}r_{x_1x_2}+r_{x_1x_1}q_{x_1x_2})-\frac{i}{2}(q_{x_1}r_{x_4}+r_{x_1}q_{x_4}).
	\end{split}
	\end{equation}
\end{subequations}
We can now recognize $P^{124}_4=\mathscr{L}_{(12)}$ and we set $P^{124}_2=\mathscr{L}_{(41)}$ and $P^{124}_1=\mathscr{L}_{(24)}$, consistently with Theorem \ref{mainthm}.  From the construction of the coefficients, it follows immediately that for the multiform
\begin{equation}
\textsf{L}_{124}=\mathscr{L}_{(12)}\textsf{ d} x_1\wedge \textsf{ d}x_2+\mathscr{L}_{(24)}\textsf{ d}x_2\wedge \textsf{ d}x_4+\mathscr{L}_{(41)}\textsf{ d}x_4\wedge \textsf{ d}x_1,
\end{equation}
the multiform EL equations are satisfied when both $\E(\mathscr{L}_{(12)})=0$ and $\E(\mathscr{L}_{(41)})=0$, and that $\textsf{dL}_{124}=0$ on these equations of motion.\\

\noindent To produce the rest of the coefficients needed for $\textsf{L}_{1234}$, we now use the same $\tilde Q_4$ together with $\mathscr{L}_{(13)}$ to define $P^{134}$ such that
\begin{equation}
\tilde Q_4\cdot \E (\mathscr{L}_{(13)})=\Div P^{134}\,.
\end{equation}
Then we find that the components of $P^{134}$ (with respect to $x_1$, $x_3$ and $x_4$) are such that $P^{134}_4=\mathscr{L}_{(13)}=-\mathscr{L}_{(31)}$ given in \eqref{AKL31}, as expected from Theorem \ref{mainthm},   
\begin{equation}
\begin{split}
P^{134}_1\equiv\mathscr{L}_{(34)}=&\frac{i}{8}(q_{x_1x_1}r_{x_1x_3}+r_{x_1x_1}q_{x_1x_3})-\frac{i}{8}(q_{3x_1}r_{x_3}+r_{3x_1}q_{x_3})-\frac{i}{32}q_{3x_1}r_{3x_1}\\
&+\frac{i}{32}(q^2r_{x_1x_1}^2+r^2q_{x_1x_1}^2)+\frac{i}{32}q_{x_1}^2r_{x_1}^2
+\frac{3}{8}qr(rq_{x_4}-qr_{x_4})+\frac{9i}{32}q^4r^4\\
&-\frac{3i}{16}q^2r^2(qr_{x_1x_1}+rq_{x_1x_1})-\frac{i}{16}(q^2r_{x_1}r_{x_3}+r^2q_{x_1}q_{x_3})-\frac{5i}{16}qr(qr_{x_1x_3}+rq_{x_1x_3})\\
&+\frac{1}{4}(q_{x_1x_1}r_{x_4}-r_{x_1x_1}q_{x_4})+\frac{3i}{16}qr(q_{x_1}r_{3x_1}+r_{x_1}q_{3x_1})+\frac{i}{16}qrq_{x_1x_1}r_{x_1x_1}\\
&-\frac{i}{16}q_{x_1}r_{x_1}(qr_{x_1x_1}+rq_{x_1x_1})-\frac{15i}{16}q^2r^2q_{x_1}r_{x_1}+\frac{3i}{8}qr(q_{x_1}r_{x_3}+r_{x_1}q_{x_3})\\
&-\frac{1}{8}(q_{x_1}r_{x_1x_4}-r_{x_1}q_{x_1x_4})\,,
\end{split}
\end{equation}
and $P^{134}_3=\mathscr{L}_{(41)}$ {-} identical to the $\mathscr{L}_{(41)}$ previously identified as $P_2^{124}$, given in \eqref{AKL41}.  Again, from the construction of the coefficients, it follows immediately that for the multiform
\begin{equation}
\textsf{L}_{134}=\mathscr{L}_{(13)}\textsf{ d} x_1\wedge \textsf{ d}x_3+\mathscr{L}_{(34)}\textsf{ d}x_3\wedge \textsf{ d}x_4+\mathscr{L}_{(41)}\textsf{ d}x_4\wedge \textsf{ d}x_1,
\end{equation}
the multiform EL equations are satisfied when both $\E(\mathscr{L}_{(13)})=0$ and $\E(\mathscr{L}_{(41)})=0$, and also that $\textsf{dL}_{134}=0$ on these equations of motion.  We are now able to form the 6 component Lagrangian multiform $\textsf{L}_{1234}$ given in \eqref{L1234} and, as we would hope, the multiform EL equations are all consequences of $\E(\mathscr{L}_{(1i)})=0$ for $i\in\{2,3,4\}$, and $\textsf{dL}_{1234}=0$ on these equations.  Therefore, in this case, we were able to incorporate two commuting variational symmetries to extend our multiform, but will this always be possible?  Inspired by the AKNS example we have just carried out, we now examine this problem in the case where the $\mathscr{L}_{(12)}$ Lagrangian and variational symmetry characteristics are autonomous polynomials in the field variables and their derivatives.\\

\noindent Given that each $\textsf{L}_{1ij}$ is determined from $\textsf{dL}_{1ij}$, we have the freedom to add any exact 2-form to $\textsf{L}_{1ij}$ without affecting the multiform structure.  As a result, the $\mathscr{L}_{(1i)}, \mathscr{L}_{(ij)}$ and $\mathscr{L}_{(j1)}$ we obtain are not uniquely defined; this fact holds added significance when extending our multiform to include more than one commuting symmetry.  When forming $\textsf{L}_{123}$, any choice of $\mathscr{L}_{(12)}, \mathscr{L}_{(23)}$ and $\mathscr{L}_{(31)}$ such that $\textsf{dL}_{123}= \tilde Q\cdot E(\mathscr{L}_{(12)})\textsf{d}x_1\wedge\textsf{d}x_2\wedge\textsf{d}x_3$ will give us a valid multiform.  When we then form $\textsf{L}_{124}$, we now require that the $\mathscr{L}_{(12)}$ is exactly the same as the one in $\textsf{L}_{123}$.  This is not a problem, since we will always be able to make it so by adding an appropriate exact 2-form to $\textsf{L}_{124}$.  Similarly, when we come to form $\textsf{L}_{134}$, it will always be possible to get the same $\mathscr{L}_{(13)}$ that was obtained in $\textsf{L}_{123}$ by adding an appropriate exact 2-form.  However, it is not entirely obvious that the $\mathscr{L}_{(14)}$ obtained at this stage will be exactly the same as the one in $\textsf{L}_{124}$.  If the two $\mathscr{L}_{(14)}$ components were to differ by a total $x_4$ derivative then it would not be possible to correct this by adding an exact 2-form without also changing $\mathscr{L}_{(13)}$, which we don't want to do because it is already in the form we require.\\

\noindent In the case of a 2-form where $\mathscr{L}_{(12)}$ contains only $x_1$ and $x_2$ derivatives of $u$, it follows from the form of $\textsf{dL}_{12i}$, as given by Theorem \ref{mainthm}, that the resulting $\mathscr{L}_{(i1)}$ Lagrangian need only contain first order derivatives of $u$ with respect to $x_i$ and no products of $x_i$ derivatives of $u$.  This is because, when applying Theorem \ref{mainthm} to obtain $\textsf{dL}_{12i}$, the only $x_i$ derivatives of $u$ that appear come from

\begin{equation}
u_{x_i}\cdot \E(\mathscr{L}_{(12)}).
\end{equation}
When reversing the integration by parts that was used to obtain $\E(\mathscr{L}_{(12)})$ from $\mathscr{L}_{(12)}$, this becomes

\begin{equation}
\D_{x_i}\mathscr{L}_{(12)} +\D_{x_1}A_1+\D_{x_2}A_2
\end{equation}
for some $A_1$ and $A_2$, and since all integration by parts was with respect to $x_1$ and $x_2$, $A_1$ and $A_2$ do not contain $2^{nd}$ or higher order derivatives with respect to $x_i$, or products of $x_i$ derivatives of $u$.  This, in conjunction with the multiform EL equations, in particular those of the form

\begin{equation}
\frac{\delta \mathscr{L}_{(12)}}{\delta u_{x_2}}=\frac{\delta \mathscr{L}_{(1i)}}{\delta u_{x_i}}
\end{equation}
for $i> 1$, where
\begin{equation}
\frac{\delta \mathscr{L}_{(ij)}}{\delta u_I}=\sum_{q,r=0}^\infty(-1)^{q+r}\D_{x_i}^q\D_{x_j}^r\frac{\partial \mathscr{L}_{(ij)}}{\partial u_{Ii^qj^r}}
\end{equation}
tells us that, modulo total $x_1$ derivatives, all $\mathscr{L}_{(1i)}$ for $i>2$ are of the form

\begin{equation}
\frac{\delta \mathscr{L}_{(12)}}{\delta u_{x_2}}u_{x_i}+\mathscr{F}_i
\end{equation}
where $\mathscr{F}_i$ is some function that has no direct dependence on $x_i$ derivatives of $u$.  This guarantees that, for example, the $\mathscr{L}_{(14)}$ coming from $\textsf{L}_{134}$ can be made to coincide with the one coming from $\textsf{L}_{124}$.\\

\noindent There is also the question of whether the multiform EL equations and closure relation that relate to $\textsf{dL}_{234}$ will be satisfied on the equations of motion relating to $\mathscr{L}_{(12)}, \mathscr{L}_{(13)}$ and $\mathscr{L}_{(14)}$.  To show that this is the case, we follow a similar argument to the one given in \cite{Suris2016}.  Once all of the $\mathscr{L}_{(1i)}$'s are consistently defined, we can form $\textsf{L}_{1234}$ and it follows from 

\begin{equation}
\textsf{d}^2(\textsf{L}_{1234})=0
\end{equation}
 and the form of $\textsf{dL}_{123}, \textsf{dL}_{124}$ and $\textsf{dL}_{134}$ in terms of the $\mathscr{L}_{(ij)}$ that

\begin{equation}
\D_{x_1}(\D_{x_2}\mathscr{L}_{(34)}-\D_{x_3}\mathscr{L}_{(24)}+\D_{x_4}\mathscr{L}_{(23)})
\end{equation}
has a double zero on the equations of motion.  Then, since each $\mathscr{L}_{(ij)}$ is an autonomous polynomial, it follows that $\textsf{dL}_{234}$ also has a double zero on the equations of motion, so all of the required relations will be satisfied.  This argument can then be used iteratively to further extend the multiform to include higher flows relating to additional commuting variational symmetries.  It is also possible to extend this argument to the case of autonomous polynomial systems in higher dimensions, but it remains an open problem to extend this argument to non-autonomous, non-polynomial systems.

\subsection{The KP multiform}

In this section, we shall construct a Lagrangian multiform for the Kadomtsev-Petviashvili (KP) equation \cite{KP1970}. This is the first example of a Lagrangian multiform for an integrable PDE in $2+1$ dimensions. It is therefore a $3$-form. A Lagrangian multiform for the discretised KP equation is given in \cite{Quispel2009}.  Attempts to perform a continuum limit (see \cite{Vermeeren2019a} for examples of such a procedure) in order to obtain a continuous Lagrangian multiform for the KP equation have, so far, been unsuccessful.  In order to proceed, we take as our starting point the Lagrangians

\begin{subequations}
\begin{equation}\label{L123}
\mathscr{L}_{(123)}=\frac{1}{2}v_{x_1x_1}v_{x_1x_3}-\frac{1}{2}v_{3x_1}^2-\frac{1}{2}v_{x_1x_2}^2+v_{x_1x_1}^3
\end{equation}

\begin{equation}\label{L412}
\mathscr{L}_{(412)}=\frac{1}{2}v_{x_1x_1}v_{x_1x_4}-2v_{3x_1}v_{x_1x_1x_2}-\frac{2}{3}v_{x_1x_2}v_{x_2x_2}+4v_{x_1x_1}^2v_{x_1x_2}
\end{equation}
\end{subequations}
where $v_{3x_1}=v_{x_1x_1x_1}$.  These are based on the KP Hamiltonians given in \cite{Case1985}, which are based on the formulation of \cite{Lin1982}.  In order to avoid non-local terms, these Lagrangians are given in terms of $v$ such that $v_{x_1x_1}=q$, where $q$ is the usual KP field variable.  These Lagrangians give equations of motion 

\begin{subequations}
\begin{equation}\label{KP123}
v_{3x_1x_3}-v_{x_1x_1x_2x_2}+v_{6x_1}+6v_{3x_1}^2+6v_{x_1x_1}v_{4x_1}=0,
\end{equation}
the first KP equation, and
\begin{equation}\label{KP412}
v_{3x_1x_4}+4v_{5x_1x_2}-\frac{4}{3}v_{x_13x_2}+8v_{4x_1}v_{x_1x_2}+24v_{3x_1}v_{x_1x_1x_2}+16v_{x_1x_1}v_{3x_1x_2}=0
\end{equation}
\end{subequations}
the second KP equation respectively.  It is straightforward (although time consuming) to check that setting $Q$ equal to

\begin{equation}
D^{-3}_{x_1}(-v_{x_1x_1x_2x_2}+v_{6x_1}+6v_{3x_1}^2+6v_{2x_1}v_{4x_1})=-D^{-1}_{x_1}(v_{x_2x_2}+3v_{x_1x_1}^2)+v_{3x_1}
\end{equation}
gives a variational symmetry $\textbf{v}_Q$ of the second KP equation \eqref{KP412}.  This implies that

\begin{equation}
\begin{split}
&(v_{x_1x_1x_1x_4}+4v_{5x_1x_2}-\frac{4}{3}v_{x_13x_2}+8v_{4x_1}v_{x_1x_2}+24v_{3x_1}v_{x_1x_1x_2}+16v_{x_1x_1}v_{3x_1x_2})(v_{x_3}-D^{-1}_{x_1}(v_{x_2x_2}+3v_{x_1x_1}^2)+v_{3x_1})\\
&=\Div P
\end{split}
\end{equation}
We use integration by parts (i.e. integrate the first bracket and differentiate the second bracket, both with respect to $x_1$) to remove non-local terms and get

\begin{equation}\label{KPcharform}
\begin{split}
&(v_{x_1x_1x_4}+4v_{4x_1x_2}-\frac{4}{3}v_{3x_2}+8v_{3x_1}v_{x_1x_2}+16v_{x_1x_1}v_{x_1x_1x_2})(v_{x_1x_3}-v_{x_2x_2}+3v_{x_1x_1}^2+v_{4x_1})=\Div \tilde P
\end{split}
\end{equation}
As expected, $\tilde P$ describes a Lagrangian 3-form

\begin{equation}\label{KPmultiform}
\textrm{L}=\mathscr{L}_{(123)}\textrm{d}x_1\wedge \textrm{d}x_2\wedge \textrm{d}x_3+\mathscr{L}_{(234)}\textrm{d}x_2\wedge \textrm{d}x_3\wedge \textrm{d}x_4+\mathscr{L}_{(341)}\textrm{d}x_3\wedge \textrm{d}x_4\wedge \textrm{d}x_1+\mathscr{L}_{(412)}\textrm{d}x_4\wedge \textrm{d}x_1\wedge \textrm{d}x_2
\end{equation}
with the $1,2,3$ and $4$ components of $\tilde P$ corresponding to $-\mathscr{L}_{(234)},\mathscr{L}_{(341)},-\mathscr{L}_{(412)}$ and $\mathscr{L}_{(123)}$. The $\mathscr{L}_{(123)}$ and $\mathscr{L}_{(412)}$ Lagrangians are precisely those given in \eqref{L123} and \eqref{L412}.  We find that the $\mathscr{L}_{(234)}$ Lagrangian is given by

\begin{equation}
\begin{split}
\mathscr{L}_{(234)}=&-\frac{1}{2}v_{x_1x_3}v_{x_1x_4}-4v_{x_1x_3}v_{3x_1x_2}+2v_{x_1x_1x_3}v_{x_1x_1x_2}-\frac{2}{3}v_{x_2x_2}v_{x_2x_3}+v_{x_2x_2}v_{x_1x_4}\\
&+4v_{x_2x_2}v_{3x_1x_2}-\frac{8}{3}v_{x_1x_2x_2}v_{x_1x_1x_2}-v_{3x_1}v_{x_1x_1x_4}+\frac{4}{3}v_{3x_1}v_{3x_2}-4v_{3x_1}^2v_{x_1x_2}\\
&+8v_{x_1x_1}v_{3x_1}v_{x_1x_1x_2}+8v_{x_1x_1}v_{x_1x_2}v_{x_2x_2}+\frac{4}{3}v_{x_1x_2}^3-8v_{x_1x_1}v_{x_1x_2}v_{x_1x_3}-8v_{x_1x_1}^3v_{x_1x_2}
\end{split}
\end{equation}
and the $\mathscr{L}_{(341)}$ Lagrangian is given by

\begin{equation}
\begin{split}
\mathscr{L}_{(341)}=&\frac{2}{3}v_{x_2x_2}^2+2v_{4x_1}^2-2v_{3x_1}v_{x_1x_1x_3}-\frac{4}{3}v_{x_2x_2}v_{x_1x_3}-\frac{2}{3}v_{x_1x_2}v_{x_2x_3}+v_{x_1x_2}v_{x_1x_4}\\
&-\frac{4}{3}v_{x_1x_1x_2}^2+\frac{4}{3}v_{3x_1}v_{x_1x_2x_2}+12v_{x_1x_1}^2v_{4x_1}+4v_{3x_1}^2v_{x_1x_1}-4v_{x_1x_1}^2v_{x_2x_2}\\
&+4v_{x_1x_1}v_{x_1x_2}^2+4v_{x_1x_1}^2v_{x_1x_3}+10v_{x_1x_1}^4
\end{split}
\end{equation}
It is clear from \eqref{KPcharform} that $\textsf{dL}=0$ when either the first \eqref{KP123} or second \eqref{KP412} KP equation holds.  When both the first and second KP equations hold, the left hand side of \eqref{KPcharform} gives a double zero, so we also have that $\delta\textsf{dL}=0$.  As a consequence, all of the multiform EL equations hold.  This is the first ever example of a Lagrangian 3-form.\\

\noindent In theory it should be possible to produce an infinite Lagrangian multiform for the entire KP hierarchy.  However, it is expected that the increasing prevalence of non-local terms as one progresses up the hierarchy would result in non-local terms appearing in the multiform.  We were able to avoid such terms in this example by expressing our equations in terms of a ``double potential'' $v$ where $v_{x_1x_1}=q$, but it is expected that, even in terms of this $v$, non-local terms would appear in the Lagrangians for the equations of the higher flows of the hierarchy.  For any finite KP multiform, one can introduce a higher potential dependent variable (e.g. $w$ such that $w_{x_1x_1x_1} =q$) in order to avoid non-local terms.  However, it is fairly straightforward to extend the multiform EL equations to allow linear non-local terms, and this may be the best approach when considering the full KP hierarchy.

\section{Conclusion}
Given any Lagrangian and an associated variational symmetry, the method outlined in this paper allows us to construct a Lagrangian multiform.  As a consequence, we have shown that the existence of a Lagrangian multiform structure is not a sufficient condition for integrability.  However, by linking Lagrangian multiforms to variational symmetries, existing results relating symmetries to integrability can now be applied to Lagrangian multiforms of the type described in this paper.  Whilst we have shown that every variational symmetry leads to a Lagrangian multiform, the question of when the converse holds remains an open problem.  In this paper, we have only considered continuous systems; we anticipate that the Noether-type theorems that are known for discrete systems, such as those given in \cite{Dorodnitsyn2001}, may yield analogous results in for discrete Lagrangian multiforms.  Whilst finalising this paper, the paper \cite{Vermeeren2019} has appeared, which uses the ideas of Noether's theorem to give an algorithm for finding the extended Lagrangian 2-form structure (i.e. incorporating arbitrarily many flows) from an appropriate set of $\mathscr{L}_{(1j)}$ Lagrangians.

\appendix

\section{Lagrangian $k$-form EL equations}\label{MFeqns}

The multiform EL equations for a Lagrangian $k$-form were first published in \cite{Vermeeren2018}.  Here we present a new proof of those equations.  We let
\begin{equation}\label{kformL}
\textsf{L}=\sum_{1\leq l_1<\ldots <l_k\leq N}\mathscr{L}_{(l_1\ldots l_k)}\ \textsf{d}x_{l_1}\wedge\ldots\wedge \textsf{d}x_{l_k}.
\end{equation}
be a $k$-form on a manifold of $N$ independent coordinates $x_1,\ldots,x_N$ and dependent variable $u$.  Therefore

\begin{equation}
\textsf{dL}=\sum_{1\leq i_1<\ldots<i_{k+1}\leq N}A^{i_1\ldots i_{k+1}}\textsf{d}x_{i_1}\wedge\ldots\wedge\textsf{d}x_{i_{k+1}}
\end{equation}
where the $A^{i_1\ldots i_{k+1}}$ depend on the $\mathscr{L}_{(l_1\ldots l_k)}$ in the usual way, i.e.

\begin{equation}
A^{i_1\ldots i_{k+1}}=\sum_{\alpha=1}^{k+1}(-1)^{k(\alpha+1)}\D_{x_{i_\alpha}}\mathscr{L}_{(i_{\alpha+1}\ldots i_{k+1}i_1\ldots i_{\alpha-1})}.
\end{equation}
For a fixed $i_1,\ldots ,i_{k+1}$, we shall write $\mathscr{L}_{(\bar \alpha)}$ to denote $\mathscr{L}_{(i_{\alpha+1}\ldots i_{k+1}i_1\ldots i_{\alpha-1})}$.  We define the variational derivative with respect to $u_I$ \textit{acting on $\mathscr{L}_{(\bar \alpha)}$}

\begin{equation}\label{vardivdef}
\frac{\delta\mathscr{L}_{(\bar \alpha)}}{\delta u_{I}}=\sum_{\substack{  J\\j_{i_\alpha}=0}}(-\D)_J\frac{\partial\mathscr{L}_{(\bar \alpha)}}{\partial u_{IJ}},
\end{equation}
where $I$ is the usual $N$ component multi-index representing derivatives with respect to $x_1,\ldots,x_N$, and the multi-indices $J$ are such that components $j_i=0$ whenever $i\neq i_1,\ldots, i_{k+1}$, i.e. $J$ represents derivatives with respect to $x_{i_1}, \ldots,x_{i_{k+1}}$.  We define that $\dfrac{\delta\mathscr{L}_{(\bar i)}}{\delta u_{I}}=0$ in the case where any component of the multi-index $I$ is negative.  Note that by this definition, the variational derivative of $\mathscr{L}_{(i_{\alpha+1}\ldots i_{k+1}i_1\ldots i_{\alpha-1})}$ with respect to $u_I$ only sees derivatives of $u_I$ with respect to the variables $x_{i_{\alpha+1}},\ldots x_{i_{k+1}},x_{i_1}\ldots ,x_{ i_{\alpha-1}}$, even though derivatives with respect to other variables may appear in $\mathscr{L}_{(i_{\alpha+1}\ldots i_{k+1}i_1\ldots i_{\alpha-1})}$.

\begin{theorem}\label{mfeqnsthm}
The dependent variable $u$ is a critical point of the $k$-form $\textsf{L}$ as defined in \eqref{kformL} if and only if for all $i_1,\ldots i_{k+1}$ such that $1\leq i_1<\ldots<i_{k+1}\leq N$, and for all $I$,

\begin{equation}\label{MFELeqns}
\sum_{\alpha=1}^{k+1}(-1)^{\alpha k}\frac{\delta\mathscr{L}_{(\bar \alpha)}}{\delta u_{I\backslash i_\alpha}}=0
\end{equation}
\end{theorem}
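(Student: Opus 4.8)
The plan is to peel the statement into two layers: a variational-principle reduction that turns criticality into a condition on each coefficient of $\textsf{dL}$, followed by a purely algebraic identity that rewrites that condition as \eqref{MFELeqns}.

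First I would reduce to a single coefficient. Exactly as in the proof of Theorem \ref{mainthm}, being a critical point means $\delta\,\textsf{dL}=0$; since $\textsf{dL}=\sum_{i_1<\dots<i_{k+1}}A^{i_1\dots i_{k+1}}\,\textsf{d}x_{i_1}\wedge\dots\wedge\textsf{d}x_{i_{k+1}}$ and the basis $(k+1)$-forms together with the jet variables $u_I$ are independent, this is equivalent to $\partial A^{i_1\dots i_{k+1}}/\partial u_I=0$ for every $(k+1)$-tuple and every $I$. I would then trade these partial-derivative conditions for higher Euler-operator conditions: writing
\[
\E_I\big(A\big):=\sum_{K}(-\D)_K\,\frac{\partial A}{\partial u_{IK}},
\]
where $K$ runs over multi-indices supported on the tangent directions $x_{i_1},\dots,x_{i_{k+1}}$, the map $\{\partial A/\partial u_{I}\}_I\mapsto\{\E_I(A)\}_I$ is lower-triangular with unit diagonal (its $K=0$ term is $\partial A/\partial u_I$), hence invertible. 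So criticality is equivalent to $\E_I(A^{i_1\dots i_{k+1}})=0$ for all tuples and all $I$, and the whole theorem reduces to evaluating $\E_I$ on the explicit expression for $A^{i_1\dots i_{k+1}}$.

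The heart of the matter is then the single identity
\[
\E_I\big(\D_{x_{i_\alpha}}\mathscr{L}_{(\bar\alpha)}\big)=\frac{\delta\mathscr{L}_{(\bar\alpha)}}{\delta u_{I\backslash i_\alpha}},
\]
with the right-hand side as in \eqref{vardivdef}. To prove it I would start from the elementary rule $\partial(\D_{x_{i_\alpha}}f)/\partial u_{N}=\D_{x_{i_\alpha}}(\partial f/\partial u_{N})+\partial f/\partial u_{N\backslash i_\alpha}$ (applied with $N=IK$), then split every tangent multi-index $K$ as $K=\hat K\, i_\alpha^{m}$, where $\hat K$ carries no $x_{i_\alpha}$-derivative and $m\ge 0$ records the order of differentiation in the $x_{i_\alpha}$ direction. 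Summing over $m$, the two families of terms produced by the product rule telescope: the piece coming from $\D_{x_{i_\alpha}}(\partial\mathscr{L}_{(\bar\alpha)}/\partial u_{IK})$ at order $m$ cancels the piece coming from $\partial\mathscr{L}_{(\bar\alpha)}/\partial u_{(IK)\backslash i_\alpha}$ at order $m+1$, and only the boundary term at $m=0$ survives, namely $\partial\mathscr{L}_{(\bar\alpha)}/\partial u_{(I\backslash i_\alpha)\hat K}$. Summing the survivors over $\hat K$ with the operator $(-\D)_{\hat K}$ is exactly the definition \eqref{vardivdef} of $\delta\mathscr{L}_{(\bar\alpha)}/\delta u_{I\backslash i_\alpha}$, because $J$ there is constrained to have $j_{i_\alpha}=0$. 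I expect this telescoping --- in particular keeping the index bookkeeping correct through the operations $\backslash i_\alpha$ and the reindexing $m\mapsto m+1$, and handling the degenerate case where $I$ has no $x_{i_\alpha}$-derivative (so that $I\backslash i_\alpha$ has a negative component and the term vanishes by the stated convention) --- to be the main obstacle.

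Finally I would assemble. Applying the identity term by term to $A^{i_1\dots i_{k+1}}=\sum_{\alpha}(-1)^{k(\alpha+1)}\D_{x_{i_\alpha}}\mathscr{L}_{(\bar\alpha)}$ gives
\[
\E_I\big(A^{i_1\dots i_{k+1}}\big)=\sum_{\alpha=1}^{k+1}(-1)^{k(\alpha+1)}\frac{\delta\mathscr{L}_{(\bar\alpha)}}{\delta u_{I\backslash i_\alpha}}=(-1)^{k}\sum_{\alpha=1}^{k+1}(-1)^{\alpha k}\frac{\delta\mathscr{L}_{(\bar\alpha)}}{\delta u_{I\backslash i_\alpha}}.
\]
The overall factor $(-1)^k$ does not affect the vanishing condition, so $\E_I(A^{i_1\dots i_{k+1}})=0$ for all $I$ and all $(k+1)$-tuples is precisely \eqref{MFELeqns}, which by the first step is equivalent to $u$ being a critical point. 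Note that for those $I$ carrying no tangent derivative the identity makes every term vanish, so the corresponding equations hold automatically; the genuine content sits in the indices $I$ carrying at least one $x_{i_\alpha}$-derivative.
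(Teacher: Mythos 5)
Your proof is correct, and it takes a genuinely different route from the paper's. The paper's argument rests on the inversion lemma \eqref{lem1}, which expresses $\partial\mathscr{L}_{(\bar \alpha)}/\partial u_I$ as a sum of total derivatives of variational derivatives (proved by a binomial-coefficient cancellation), and then runs an explicit descending induction on $|I|$ in which that lemma is applied, the inductive hypothesis is invoked under the operators $\D_J$, and the lemma is applied again in reverse; the algebra and the induction are interwoven. You instead separate the two ingredients cleanly: (i) the off-shell commutation identity $\E_I\big(\D_{x_{i_\alpha}}\mathscr{L}_{(\bar \alpha)}\big)=\delta\mathscr{L}_{(\bar \alpha)}/\delta u_{I\backslash i_\alpha}$, proved by the product rule plus a telescoping sum over the $x_{i_\alpha}$-order (your bookkeeping here is right, including the degenerate case where $I\backslash i_\alpha$ has a negative entry), and (ii) the equivalence of the systems $\{\partial A^{i_1\ldots i_{k+1}}/\partial u_I=0\ \forall I\}$ and $\{\E_I(A^{i_1\ldots i_{k+1}})=0\ \forall I\}$. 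What your route buys is conceptual clarity: the multiform EL expressions \eqref{MFELeqns} are exhibited, up to the harmless factor $(-1)^k$, as \emph{identically} equal on jet space (not merely on shell) to the higher Euler operators of the coefficients of $\textsf{dL}$, which ties the theorem to standard machinery on how higher Euler operators interact with total derivatives. What the paper's route buys is self-containedness and the intermediate characterisation $\partial A^{i_1\ldots i_{k+1}}/\partial u_I=0$, which it singles out as sometimes the more convenient form of the equations.

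The one step you should spell out is ``lower-triangular with unit diagonal, hence invertible'': the off-diagonal entries of your map are total-derivative operators, not scalars, so the equivalence of the two systems as conditions on $u$ is not literal linear algebra. It requires (a) that $A^{i_1\ldots i_{k+1}}$ depends on finitely many jet coordinates, so the descending induction on $|I|$ has a starting point, and (b) the fact that a function of the jet which vanishes identically along the prolongation of $u$ has all its total derivatives vanishing there as well, so the off-diagonal terms can be discarded at each inductive step. This is precisely the fact the paper itself uses silently when it applies its inductive hypothesis under the operators $\D_J$ in \eqref{steps2}, so your proof is at the same level of rigour as the paper's; but since you advertise the step as mere triangularity, it deserves a sentence.
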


\noindent In order to prove that these are the multiform EL equations, we will require the following lemma:

\begin{lemma}Let $1\leq i_1<\ldots<i_{k+1}\leq N$ be fixed.  For all multi-indices $I$,
\begin{equation}\label{lem1}
\frac{\partial \mathscr{L}_{(\bar \alpha)}}{\partial u_I}=\sum_{\substack{  J\\j_i\leq 1\\j_{i_\alpha}=0}}\D_J\frac{\delta \mathscr{L}_{(\bar \alpha)}}{\delta u_{IJ}}
\end{equation}
where the summation is over all multi-indices $J$ as defined for \eqref{vardivdef}, such that the $i_\alpha^{th}$ component of $J$ is zero and the non-zero $j_i$ are equal to 1.
\end{lemma}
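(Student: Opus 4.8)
The plan is to prove \eqref{lem1} by direct substitution: I would insert the definition \eqref{vardivdef} of the variational derivative into the right-hand side and show that the resulting multiple sum collapses, by cancellation, to the single term $\partial\mathscr{L}_{(\bar\alpha)}/\partial u_I$. Throughout I fix $i_1,\ldots,i_{k+1}$ and write $\mathcal{S}=\{i_1,\ldots,i_{k+1}\}\setminus\{i_\alpha\}$ for the $k$ directions in which the variational derivative integrates by parts; by the conventions attached to \eqref{vardivdef} every multi-index that occurs is supported on $\mathcal{S}$ (in particular its $i_\alpha$-component vanishes). This direct route seems cleaner than an induction on $k$.

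First I would expand the right-hand side of \eqref{lem1} using \eqref{vardivdef},
\begin{equation}
\sum_{\substack{J\\ j_i\leq 1}}\D_J\frac{\delta\mathscr{L}_{(\bar\alpha)}}{\delta u_{IJ}}=\sum_{\substack{J\\ j_i\leq 1}}\ \sum_{J'}\D_J(-\D)_{J'}\frac{\partial\mathscr{L}_{(\bar\alpha)}}{\partial u_{IJJ'}},
\end{equation}
where $J$ runs over the squarefree multi-indices on $\mathcal{S}$ and $J'$ over all multi-indices on $\mathcal{S}$. Since the total derivatives $\D_{x_i}$ commute, I would collapse $\D_J(-\D)_{J'}=(-1)^{|J'|}\D_{J+J'}$ and regroup the double sum by the combined shift $M=J+J'$. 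The coefficient multiplying $\D_M\,\partial\mathscr{L}_{(\bar\alpha)}/\partial u_{IM}$ is then
\begin{equation}
c_M=\sum_{\substack{J\text{ squarefree}\\ 0\leq J\leq M}}(-1)^{|M|-|J|},
\end{equation}
the sum being over squarefree $J$ with $0\leq J\leq M$ componentwise (so that $J'=M-J\geq 0$).

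The heart of the argument is the evaluation of $c_M$. Because the constraints $J_i\in\{0,1\}$, $J_i\leq M_i$ are imposed independently in each direction and the sign factorizes, I would write $c_M=\prod_{i\in\mathcal{S}}\big(\sum_{J_i=0}^{\min(1,M_i)}(-1)^{M_i-J_i}\big)$. Each factor equals $1$ when $M_i=0$ and equals $(-1)^{M_i}+(-1)^{M_i-1}=0$ when $M_i\geq 1$; hence $c_M=1$ if $M=0$ and $c_M=0$ otherwise. Substituting back, only the $M=0$ term survives and reproduces $\partial\mathscr{L}_{(\bar\alpha)}/\partial u_I$, which is exactly \eqref{lem1}. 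I do not expect a genuine obstacle here: for a Lagrangian of finite differential order all the sums are finite, so the interchange and regrouping of sums are unproblematic, and the one point needing care is the bookkeeping of supports — keeping $J$ and $J'$ confined to $\mathcal{S}$ with vanishing $i_\alpha$-component — which is precisely what legitimizes the direction-by-direction factorization. This is the multidimensional version of the single-variable telescoping that underlies the usual relation between $\partial/\partial u_I$ and the Euler operator.
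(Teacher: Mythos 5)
Your proof is correct and follows essentially the same route as the paper: expand the right-hand side via \eqref{vardivdef}, regroup by the combined multi-index $M=J+J'$, and show that the coefficient of each term $\D_M\,\partial\mathscr{L}_{(\bar\alpha)}/\partial u_{IM}$ with $M\neq 0$ vanishes by an alternating-sum cancellation. The only cosmetic difference is in how that cancellation is evaluated — the paper counts appearances grouped by $|J|$ and invokes $\sum_{i=0}^{r}(-1)^i\binom{r}{i}=(1-1)^r=0$, whereas you factorize the coefficient direction-by-direction so that each factor with $M_i\geq 1$ vanishes; these are the same computation organized differently.
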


\begin{proof}
We first notice that the partial derivative on the left hand side of \eqref{lem1} appears only once in the sum on the right hand side.  We now need to show that all other terms that appear on the right hand side of \eqref{lem1}, which are all of the form $\D_A\dfrac{\partial \mathscr{L}_{(\bar \alpha)}}{\partial u_{IA}}$ for some multi-index $A$, sum to zero.  To show this, we consider the term $\D_A\dfrac{\partial \mathscr{L}_{(\bar \alpha)}}{\partial u_{IA}}$, and let $r$ be the number of non-zero entries in $A$.  We notice that this term appears exactly once when $|J|=0$ with a factor of $(-1)^{|A|}$, exactly ${r\choose 1}$ times with a factor of $(-1)^{|A|+1}$ when $|J|=1$, exactly ${r\choose 2}$ times with a factor of $(-1)^{|A|+2}$ when $|J|=2$ etc... In total, this term appears with a factor of $\pm\sum_{i=0}^{r}(-1)^i{r\choose i}$.  It can easily be seen that this sum is zero by considering the binomial expansion of $(1-1)^r$.

\end{proof}

\begin{proof}{(of Theorem \ref{mfeqnsthm})}
For the first part of this proof, we will show that $\delta \textsf{dL}=0$ by following the argument given in \cite{Suris2016}.  We assume that \textsf{L} contains terms up to $n^{th}$ order derivatives of $u$, (i.e. \textsf{L} depends on $u_I$ with $|I|\leq n$).  Let $B$ be an arbitrary $k+1$ dimensional ball with surface $\partial B$.  We consider the action functional $S$ over the closed surface $\partial B$ such that

\begin{equation}
S[u]=\oint_{\partial B}\sf{L}
\end{equation}
We then apply Stokes' theorem to write $S$ in terms of an integral over B:

\begin{equation}
S[u]=\int_ B\sf{dL}
\end{equation}
and we look for solutions of

\begin{equation}\label{deltaS1}
\delta S=\int_{B}\delta\textsf{d}\textsf{L}=0
\end{equation}
Since this must hold for arbitrary variations (i.e. with no boundary constraints) for every ball $B$, it follows that $u$ is a critical point of \textsf{L} if and only if the integrand $\delta \sf{dL}=0$, where

\begin{equation}\label{deltadL}
\delta\textsf{dL}=\sum_{1\leq i_1<\ldots<i_{k+1}\leq N}\sum_I\frac{\partial A^{i_1\ldots i_{k+1}}}{\partial u_{I}}           \delta u_I\wedge\textsf{d}x_{i_1}\wedge\ldots\wedge\textsf{d}x_{i_{k+1}}.
\end{equation}
This is equivalent to the statement that for all $1\leq i_1<\ldots<i_{k+1}\leq N$, for all $I$,

\begin{equation}\label{deltadL}
\frac{\partial A^{i_1\ldots i_{k+1}}}{\partial u_{I}}=0 
\end{equation}

\noindent We could stop here, and use \eqref{deltadL} as our multiform EL equations.  Indeed, there are occasions where this is the most convenient formulation to use.  However, it is more illuminating to express this in terms of variational derivatives; by doing so we see more clearly the interplay between the constituent $\mathscr{L}_{(l_1\ldots l_k)}$ and see that a consequence of $\delta \sf{dL}=0$ is that $\E(\mathscr{L}_{(l_1\ldots l_k)})=0$ for each $\mathscr{L}_{(l_1\ldots l_k)}$.\\

\noindent For the second part of this proof, we show that, for any choice of $1\leq i_1<\ldots<i_{k+1}\leq N$, \eqref{deltadL} holds if and only if $\forall I$,
\begin{equation}\label{trip}
\sum_{\alpha=1}^{k+1}(-1)^{\alpha k}\frac{\delta\mathscr{L}_{(\bar \alpha)}}{\delta u_{I\backslash i_\alpha}}=0.
\end{equation}
To do this, we first show that \eqref{trip} holds for $|I|> n$.  We then use an inductive argument to show that if \eqref{trip} holds for $|I|> m$ then it also holds for $|I|=m$.  The converse (that \eqref{trip} $\implies$ \eqref{deltadL}) is then easily seen from the intermediary steps of the proof.\\

\noindent We begin by (arbitrarily) fixing $1\leq i_1<\ldots<i_{k+1}\leq N$ and noticing that for $|I|\geq n+2$, \eqref{trip} holds.  In fact all terms are zero since, by definition, there are no $n+1^{th}$ order derivatives in our multiform.  We now consider the relation $\dfrac{\partial A^{i_1\ldots i_{k+1}}}{\partial u_{I}}=0$ in the case where $|I|=n+1$.  In this case we find that

\begin{equation}
\frac{\partial A^{i_1\ldots i_{k+1}}}{\partial u_{I}}=\sum_{\alpha=1}^{k+1}(-1)^{\alpha k+1}\frac{\partial\mathscr{L}_{(\bar \alpha)}}{\partial u_{I\backslash i_\alpha}}
\end{equation}
since there are no $n+1^{th}$ order derivatives in the $\mathscr{L}_{(\bar \alpha)}$.  By setting this equal to zero, we see that \eqref{trip} holds in the case where $|I|=n+1$.\\

\noindent Our inductive hypothesis is that \eqref{trip} holds for $|I| > m$.  We now consider the relation $\dfrac{\partial A^{i_1\ldots i_{k+1}}}{\partial u_{I}}=0$ in the case where $|I| =m$.\\

\noindent We now notice that

\begin{equation}\label{steps1}
\begin{split}
\frac{\partial A^{i_1\ldots i_{k+1}}}{\partial u_{I}}&=\sum_{\alpha=1}^{k+1}(-1)^{\alpha k+1}\frac{\partial }{\partial u_{I}}\D_{x_{i_\alpha}}\mathscr{L}_{(\bar \alpha)}\\
&=\sum_{\alpha=1}^{k+1}(-1)^{\alpha k+1}\bigg\{\frac{\partial \mathscr{L}_{(\bar \alpha)}}{\partial u_{I\backslash i_\alpha}}+\D_{x_{i_\alpha}}\frac{\partial \mathscr{L}_{(\bar \alpha)}}{\partial u_I}\bigg\}\\
&=\sum_{\alpha=1}^{k+1}(-1)^{\alpha k+1}\bigg\{\frac{\partial \mathscr{L}_{(\bar \alpha)}}{\partial u_{I\backslash i_\alpha}}+\sum_{\substack{J\\j_i\leq 1\\j_{i_\alpha}=0}}\D_{Ji_\alpha}\frac{\delta \mathscr{L}_{(\bar \alpha)}}{\delta u_{IJ}}\bigg\}\\
&=\sum_{\alpha=1}^{k+1}(-1)^{\alpha k+1}\bigg\{\frac{\partial \mathscr{L}_{(\bar \alpha)}}{\partial u_{I\backslash i_\alpha}}+\sum_{\substack{J\\j_i\leq 1\\ j_{i_\alpha}=1}}\D_{J}\frac{\delta \mathscr{L}_{(\bar \alpha)}}{\delta u_{IJ\backslash i_\alpha}}\bigg\}\\
&=\sum_{\alpha=1}^{k+1}(-1)^{\alpha k+1}\bigg\{\frac{\partial \mathscr{L}_{(\bar \alpha)}}{\partial u_{I\backslash i_\alpha}}\bigg\}+\sum_{ \substack{J\\j_i\leq 1\\|J|>0}}~\sum_{\substack{\alpha\\j_{i_\alpha}>0}}(-1)^{\alpha k+1}D_{J}\frac{\delta \mathscr{L}_{(\bar \alpha)}}{\delta u_{IJ\backslash i_\alpha}}
\end{split}
\end{equation}
where we have made use of \eqref{lem1} in the third line, re-labeled $J$ in the fourth line and changed the order of the summation in the last.  We now apply the inductive hypothesis to get

\begin{equation}\label{steps2}
\begin{split}
\frac{\partial A^{i_1\ldots i_{k+1}}}{\partial u_{I}}&=\sum_{\alpha=1}^{k+1}(-1)^{\alpha k+1}\bigg\{\frac{\partial \mathscr{L}_{(\bar \alpha)}}{\partial u_{I\backslash i_\alpha}}\bigg\}+\sum_{\substack{J\\j_i\leq 1\\|J|>0}}~\sum_{\substack{\alpha\\j_{i_\alpha}=0}}(-1)^{\alpha k}\D_{J}\frac{\delta \mathscr{L}_{(\bar \alpha)}}{\delta u_{IJ\backslash i_\alpha}}\\
&=\sum_{\alpha=1}^{k+1}(-1)^{\alpha k+1}\bigg\{\frac{\partial \mathscr{L}_{(\bar \alpha)}}{\partial u_{I\backslash i_\alpha}}-\sum_{\substack{ J\\j_i\leq 1\\j_{i_\alpha}=0\\ |J| >0}}\D_{J}\frac{\delta \mathscr{L}_{(\bar \alpha)}}{\delta u_{IJ\backslash i_\alpha}}\bigg\}=0.
\end{split}
\end{equation}
Finally, we use \eqref{lem1} to express this as

\begin{equation}\label{split4}
\begin{split}
\frac{\partial A^{i_1\ldots i_{k+1}}}{\partial u_{I}}&=\sum_{\alpha=1}^{k+1}(-1)^{\alpha k+1}\frac{\delta \mathscr{L}_{(\bar \alpha)}}{\delta u_{I\backslash i_\alpha}}=0
\end{split}
\end{equation}
and we have shown that \eqref{trip} holds for $|I|=m$.  By induction, it follows that \eqref{trip} holds for all $I$.  The converse can easily be seen to hold by following the steps taken in \eqref{steps1}, \eqref{steps2} and \eqref{split4} in reverse order.\\

\noindent We have shown that the multiform EL equations \eqref{MFELeqns} for a given $1\leq i_1<\ldots<i_{k+1}\leq N$ are equivalent to $\delta A^{i_1\ldots i_{k+1}}=0$ for the same $1\leq i_1<\ldots<i_{k+1}\leq N$.  It follows that the multiform EL equations holding for all $1\leq i_1<\ldots<i_{k+1}\leq N$ is equivalent to $\delta \textsf{dL}=0$.
\end{proof}

\subsection*{Compliance with ethical standards}
On behalf of all authors, the corresponding author states that there is no conflict of interest.

\bibliographystyle{unsrt}

\bibliography{Noether1028}

\end{document}